\newtheorem{claim}{Claim}[section]
\newtheorem{theorem}[claim]{Theorem}
\newtheorem{definition}[claim]{Definition}
\newtheorem{corollary}[claim]{Corollary}
\begin{document}


\title{Pseudo-orbit approach to trajectories of resonances in quantum graphs with general vertex coupling: Fermi rule and high-energy asymptotics} 



\author{Pavel Exner}
\email[]{exner@ujf.cas.cz}
\homepage[]{http://gemma.ujf.cas.cz/~exner/}
\thanks{P.E. was supported by the project 14-06818S ``Rigorous Methods in Quantum Dynamics:
Geometry and Magnetic Fields'' of the Czech Science Foundation}
\affiliation{Department of Theoretical Physics, Nuclear Physics Institute ASCR, 25068 \v{R}e\v{z} near Prague, Czechia}
\affiliation{Doppler Institute for Mathematical Physics and Applied Mathematics,
Czech Technical University in Prague, B\v{r}ehov\'a 7, 11519 Prague, Czechia}

\author{Ji\v{r}\'{i} Lipovsk\'{y}}
\email[]{jiri.lipovsky@uhk.cz}
\homepage[]{http://lide.uhk.cz/lipovji1/}
\thanks{J.L. was supported by the project 15-14180Y ``Spectral and Resonance Properties of Quantum Models'' of the Czech Science Foundation}
\affiliation{Department of Physics, Faculty of Science, University of Hradec Kr\'{a}lov\'{e},
Rokitansk\'eho 62, 500\,03 Hradec Kr\'{a}lov\'{e}, Czechia}


\date{\today}

\begin{abstract}
The aim of the paper is to investigate resonances in quantum graphs with a general self-adjoint coupling in the vertices and their trajectories with respect to varying edge lengths. We derive formul\ae\ determining the Taylor expansion of the resonance pole position up to the second order which represent, in particular, a counterpart to the Fermi rule derived recently by Lee and Zworski for graphs with the standard coupling. Furthermore, we discuss the asymptotic behavior of the resonances in the high-energy regime in the situation where the leads are attached through $\delta$ or $\delta_\mathrm{s}'$ conditions, and we prove that in the case of $\delta_\mathrm{s}'$ coupling the resonances approach to the real axis with the increasing real parts as $\mathcal{O}\big((\mathrm{Re\,}k)^{-2}\big)$.
\end{abstract}

\pacs{03.65.Ge, 03.65.Nk, 02.10.Ox}

\maketitle 

\section{Introduction}

The idea to investigate quantum mechanics of systems the configuration space of which is a metric graph dates back to the 1930's but it attracted a substantial attention only in the last three decades. Quantum graphs, from the mathematical point of view families of ODE's coupled by appropriate conditions, provide nowadays a simple model to numerous physical systems, small as well as large. For a thorough description of state of art in this field we refer the reader to the monograph \cite{BK} and the references therein.

Quantum graphs having semiinfinite edges are useful to study resonance effects. As it is well known, the notion of resonance can bear different meanings. One has, in particular, \emph{resolvent resonances} understood as poles of the meromorphic continuation of the resolvent to an `unphysical' sheet of the energy surface, or \emph{scattering resonances} understood as poles of the meromorphic continuation of the determinant of the scattering matrix. It was proven in \cite{EL1} that in the present context the two notions coincide, hence we further speak about resonances with adjectives. We note that sometimes it is useful to consider together with these `true' resonances also the eigenvalues, including embedded ones, with eigenfunctions supported only on the internal part of the graph \cite{DEL, DP}.

The said embedded eigenvalues give also often rise to (true) resonances when being exposed to a perturbation; this is one of the most common mechanisms behind this effect. A natural way to perturb a graph consists of varying lengths of its edges; such resonances originating from an embedded eigenvalue are sometimes called \emph{topological resonances}. They were studied recently in \cite{EL2, GSS, LZ}; in the last named paper Lee and  Zworski found a formula for the second derivative $\mathrm{Im\,}\ddot{k}$ with respect to the edge length in the case of the standard (Kirchhoff) coupling in the graph vertices; here $k$ is the square root of energy.

Since the standard coupling is not the only possibility, and in fact, \emph{every} self-adjoint vertex coupling can be given a physical meaning \cite{EP}, we investigate in this paper the corresponding Taylor expansion of the resonance pole positions in this more general setting. The method we use is based on pseudo-orbit expansion introduced recently in \cite{Li3}. We will derive the expression for $\mathrm{Im\,}\ddot{k}$ and $\mathrm{Re\,}\ddot{k}$ in the case graphs with a general vertex coupling. We will see that the real part of $\ddot{k}$ is needed in general to express the asymptotics of the resonances trajectories in the vicinity of the eigenvalue; instead of the parabolic curves $\mathrm{Im\,}k = c(\mathrm{Re\,}k)^2$ appearing, e.g., in Figure~4 in \cite{LZ} we obtain `slanted' parabolas -- see Figure~\ref{fig1} below. The general result will be illustrated by two examples in Section~\ref{sec-ex1}.

In the second part of the paper we address a different question; we ask about the high-energy asymptotic behavior of resonances in the situation when the leads are attached to the inner part of the graph by either $\delta$ or $\delta_\mathrm{s}'$ coupling while the other vertex conditions may be in general arbitrary self-adjoint. In the $\delta$ case we find that the resonances are asymptotically distributed as the ones for the standard coupling. On the other hand, the resonances for $\delta_\mathrm{s}'$-coupling converge in the high-energy limit to the eigenvalues of a decoupled graph with Neumann conditions and their imaginary parts behave as $(\mathrm{Re\,}k)^{-2}$. We again illustrate this result on simple examples in the last section.

\section{Preliminaries}

Suppose we have a metric graph $\Gamma$ which consists of a set of vertices $\mathcal{V}=\{\mathcal{X}_j\}$, a set of $N$ internal edges $\mathcal{E}_\mathrm{i}=\{e_j\}_{j=1}^N$ which we parametrized identifying them with intervals $(0,\ell_j)$, and a set of $M$ external semiinfinite edges $\mathcal{E}_\mathrm{e}=\{e_j\}_{j=N+1}^{N+M}$ identified with the halflines $(0,\infty)$. The graph is equipped with the second order differential operator $H$ acting as $-\frac{\mathrm{d}^2}{\mathrm{d}x^2}$. The latter is regarded as Hamiltonian of a quantum particle provided we use the units in which the mass of the particle and the reduced Planck constant satisfy $\hbar = 2m = 1$. To make it a self-adjoint operator we have to choose the domain properly; this is achieved by restricting it to functions in $W^{2,2}(\Gamma) = \oplus_{j=1}^{N+M} W^{2,2}(e_j)$ which fulfill the coupling conditions
\begin{equation}
  (U_j-I)\Psi_j + i (U_j+I)\Psi_j' = 0  \label{eq-cc}
\end{equation}
at the vertices, where $U_j$ is $d\times d$ unitary matrix, $d$ is the degree of the vertex, $I$ is the $d\times d$ identity matrix, $\Psi_j$ is the vector of limits of the functional values from the edges to the $j$-th vertex, and $\Psi_j'$ is the vector of limits of the outgoing derivatives.

A useful trick when dealing graphs having external edges is to consider the inner part only replacing the leads by an effective energy-dependent coupling $\tilde U_j(k)$ at the vertices where the leads were attached. Let a vertex connect $n$ internal edges and $m$ halflines and let the corresponding matrix $U_j$ consist of blocks $U_j = \left(\begin{array}{cc}U_1& U_2\\ U_3 & U_4\end{array}\right)$, where the $n\times n$ matrix $U_1$ corresponds to the coupling between the internal edges, $m\times m$ matrix $U_4$ between the external edges and the rectangular matrices $U_2$ and $U_3$ to the mixed coupling. The matrix $\tilde U_j(k)$ has the form \cite{EL2}
$$
  \tilde U_j(k) = U_1 - (1-k)U_2[(1-k)U_4-(k+1)I]^{-1}U_3
$$
and the boundary value vectors $\tilde \Psi_j$ and $\tilde \Psi_j'$ with $n$ entries corresponding to the internal edges satisfies the condition
$$
  (\tilde U_j-I)\tilde \Psi_j + i (\tilde U_j+I)\tilde \Psi_j' = 0\,,
$$
There are several classes of vertex conditions which will play a role later in the paper and it is useful to list them now:

\begin{itemize}
\item {\bf $\delta$-conditions} are in the $(n+m)\times (n+m)$-parameter family the only ones with the wave functions continuous at the vertex. They can be written as
\begin{eqnarray*}
  f (\mathcal{X}) & \equiv & f_i (\mathcal{X})=  f_j (\mathcal{X})\quad \hbox{for all }i,j\in \{1,\dots , n+m\}\\
  \sum_{j = 1}^{n+m}   f'_j (\mathcal{X}) & = & \alpha   f (\mathcal{X})
\end{eqnarray*}
corresponding to the unitary matrix is $U = \frac{2}{n+m+ i \alpha} J - I$, where $J$ is $(n+m)\times (n+m)$ matrix with all the entries equal to one.

\item {\bf $\delta_{\mathrm{s}}'$-conditions}  have the roles and the functions and derivatives interchanged:
\begin{eqnarray*}
  f ' (\mathcal{X}) & \equiv & f'_i(\mathcal{X})=  f'_j (\mathcal{X})\,,\quad \hbox{for all }i,j\in \{1,\dots , n+m\}\\
  \sum_{j = 1}^{n+m}   f_j (\mathcal{X}) & = & \beta   f' (\mathcal{X})\,.
\end{eqnarray*}
They can be regarded as a generalization of $\delta'$ condition on the line which preserves the permutation symmetry; the coupling matrix is $U = I - \frac{2}{n+m -i \beta} J$ in this case.
\item {\bf standard conditions} (sometimes called Kirchhoff, free, or even Neumann) represent a special case of $\delta$-condition for $\alpha = 0$, i.e. the functional values are continuous at the vertex and the sum of the outgoing derivatives vanishes. The corresponding unitary matrix is $U = \frac{2}{n+m} J - I$.
\item {\bf Dirichlet conditions} means that all the functional values are zero at the vertex, the unitary matrix is $U = -I$.
\item {\bf Neumann conditions}, on the other hand, mean that all the derivatives vanish at the vertex, and $U$ is now the identity matrix.
\end{itemize}

To conclude the preliminaries, let us make precise the concept of resonance mentioned in the introduction.

\begin{definition}
By the a \emph{resonance} we mean a complex $k^2$ for which there exists a generalized eigenfunction in $L^2_{\mathrm{loc}}(\Gamma)$, $f\not \equiv 0$, which satisfies the Schr\"odinger equation $-f''(x) + k^2 f(x) = 0$ on all the edges together with the coupling conditions (\ref{eq-cc}) at the vertices, and on all the external edges it has the form $c_j \,\mathrm{e}^{ikx}$.
\end{definition}

\section{Pseudo-orbit expansion for the resonance condition}

In this section we review the method of pseudo-orbit expansion for the resonance condition, which was developed recently in \cite{Li3}; for illustrating examples see \cite{Li4}. First, we define the effective vertex-scattering matrix $\tilde \sigma^{(v)}(k)$. Suppose again that we have a vertex $v$ which connects $n$ internal and $m$ external edges. All the internal edges are parametrized by $(0,\ell_j)$ with the point $x = 0$ corresponding to $v$; the leads are parametrized by $(0,\infty)$ with $x = 0$ corresponding to $v$. The generalized eigenfunctions components are $f_j(x) = a_j^{\mathrm{in}}\,\mathrm{e}^{-ikx}+a_j^{\mathrm{out}}\,\mathrm{e}^{ikx}$, $j = 1,\dots, n$, on the internal edges and $g_s(x) = b_s\,\mathrm{e}^{ikx}$, $s = 1,\dots, m$, on the external edges.

\begin{definition}
The effective vertex-scattering matrix $\tilde \sigma^{(v)}(k)$ is the $n\times n$ matrix which maps the vector of coefficients of the amplitudes of the incoming waves into the vector of the amplitudes of the outgoing waves $\mathbf{a}_v^{\mathrm{out}} = \tilde \sigma^{(v)} \mathbf{a}_v^{\mathrm{in}}$, where $\mathbf{a}_v^{\mathrm{in}} = (a_1^{\mathrm{in}}, \dots, a_n^{\mathrm{in}})^\mathrm{T}$ and  $\mathbf{a}_v^{\mathrm{out}} = (a_1^{\mathrm{out}}, \dots, a_n^{\mathrm{out}})^\mathrm{T}$.
\end{definition}

There is a simple connection between the matrices $\tilde \sigma(k)$ and $\tilde U(k)$, see Theorem 4.2 in \cite{Li3}. For simplicity, we drop the subscript $v$ specifying the vertex in question.

\begin{theorem}
The effective vertex-scattering matrix is $\tilde\sigma (k) = - [(1-k)\tilde U(k)-(1+k)I_n]^{-1}[(1+k)\tilde U(k)-(1-k)I_n]$, where $I_n$ is $n\times n$ identity matrix.
\end{theorem}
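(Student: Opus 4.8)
The plan is to obtain the relation $\mathbf{a}^{\mathrm{out}} = \tilde\sigma(k)\,\mathbf{a}^{\mathrm{in}}$ directly from the effective coupling condition $(\tilde U-I_n)\tilde\Psi + i(\tilde U+I_n)\tilde\Psi' = 0$ by inserting the explicit form of the generalized eigenfunction on the internal edges. First I would evaluate, on each internal edge $e_j$ parametrized so that $x=0$ corresponds to the vertex $v$, the component $f_j(x) = a_j^{\mathrm{in}}\,\mathrm{e}^{-ikx} + a_j^{\mathrm{out}}\,\mathrm{e}^{ikx}$ and its derivative at $x=0$; this gives $f_j(0) = a_j^{\mathrm{in}} + a_j^{\mathrm{out}}$ and, since the outgoing derivative is taken in the direction of increasing $x$, i.e.\ away from the vertex, $f_j'(0) = ik\,(a_j^{\mathrm{out}} - a_j^{\mathrm{in}})$. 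In vector notation this reads $\tilde\Psi = \mathbf{a}^{\mathrm{in}} + \mathbf{a}^{\mathrm{out}}$ and $\tilde\Psi' = ik\,(\mathbf{a}^{\mathrm{out}} - \mathbf{a}^{\mathrm{in}})$.

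Next I would substitute these expressions into the effective coupling condition and, using $i\cdot ik = -k$, collect separately the terms proportional to $\mathbf{a}^{\mathrm{out}}$ and to $\mathbf{a}^{\mathrm{in}}$. The coefficient multiplying $\mathbf{a}^{\mathrm{out}}$ is $(\tilde U - I_n) - k(\tilde U + I_n) = (1-k)\tilde U - (1+k)I_n$ and the one multiplying $\mathbf{a}^{\mathrm{in}}$ is $(\tilde U - I_n) + k(\tilde U + I_n) = (1+k)\tilde U - (1-k)I_n$, so the coupling condition becomes
\[
 \big[(1-k)\tilde U(k) - (1+k)I_n\big]\,\mathbf{a}^{\mathrm{out}} = -\big[(1+k)\tilde U(k) - (1-k)I_n\big]\,\mathbf{a}^{\mathrm{in}}\,.
\]
Multiplying from the left by the inverse of $(1-k)\tilde U(k) - (1+k)I_n$ and comparing with the defining relation $\mathbf{a}^{\mathrm{out}} = \tilde\sigma(k)\,\mathbf{a}^{\mathrm{in}}$ then yields the asserted formula.

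The only step that warrants a comment is the invertibility of $(1-k)\tilde U(k) - (1+k)I_n$. Since $\tilde U(k)$ is a rational matrix function of $k$, the determinant of this matrix is meromorphic in $k$, and it does not vanish identically because at $k=1$ it equals $\det(-2I_n) = (-2)^n \neq 0$; hence the matrix is regular for all but a discrete set of values of $k$, and for such $k$ the displayed manipulation is legitimate and determines $\tilde\sigma(k)$ uniquely. Apart from this the argument is a straightforward rearrangement; the point one has to be careful about is the orientation convention for the outgoing derivatives entering $\tilde\Psi'$, since the opposite convention would interchange the two bracketed factors above and thereby alter the resulting matrix.
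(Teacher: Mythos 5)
Your derivation is correct: substituting $\tilde\Psi = \mathbf{a}^{\mathrm{in}}+\mathbf{a}^{\mathrm{out}}$ and $\tilde\Psi' = ik(\mathbf{a}^{\mathrm{out}}-\mathbf{a}^{\mathrm{in}})$ into $(\tilde U-I_n)\tilde\Psi + i(\tilde U+I_n)\tilde\Psi'=0$ and collecting terms gives exactly the stated formula, and your remarks on generic invertibility and on the sign convention for the outgoing derivative are both apposite. The paper itself offers no proof here, deferring to Theorem~4.2 of the cited reference, but your argument is the standard direct computation that establishes the claim.
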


Next we introduce matrices we shall need to state the resonance condition. To this purpose we associate with the graph $\Gamma$ an oriented graph $\Gamma_2$ in which the external edges of $\Gamma$ are removed and each internal edge is replaced by two oriented edges (conventionally called \emph{bonds}) $b_j$, $\hat b_j$ of the lengths $\ell_j$ and opposite orientations. We use a permutation, not unique of course, between the list of bonds grouped according to the vertices to which they are directed and that that grouped by their orientation.

\begin{definition}
For a fixed permutation indicated above, we define the energy-dependent $2N\times 2N$ matrix $\tilde{\Sigma}(k)$ as a matrix similar to the block diagonal matrix with blocks $\tilde \sigma_v (k)$, the similarity transformation being determined by the bijective map between the bases
$$
\vec{\alpha} = (\alpha_{b_1}^\mathrm{in},\dots, \alpha_{b_N}^\mathrm{in},\alpha_{\hat{b}_1}^\mathrm{in},\dots,
\alpha_{\hat{b}_N}^\mathrm{in})^\mathrm{T}
$$
and
$$
(\alpha_{b_{v_{1}1}}^\mathrm{in},\dots,\alpha_{b_{v_{1}d_1}}^\mathrm{in},\alpha_{b_{v_{2}1}}^\mathrm{in},\dots,\alpha_{b_{v_{2}d_2}}^\mathrm{in},\dots)^\mathrm{T}\,,
$$
where $b_{v_{1}j}$ is the $j$-th edge with the endpoint at the vertex $v_1$. We also introduce three other $2N\times 2N$ matrices, the matrix $Q = \left(\begin{array}{cc}0& I_N\\I_N & 0\end{array}\right)$, the scattering matrix $S(k) = Q \tilde\Sigma (k)$, and
$$
  L =\mathrm{diag\,}(\ell_1,\dots , \ell_N,\ell_1,\dots , \ell_N)\,.
$$ 
\end{definition}

\noindent These notions allow us to state the resonance condition for the proof of which we refer to Theorem 4.5 in \cite{Li3}.

\begin{theorem} \label{thm: res1}
Resonances of the graph in question are given a solutions to the equation
$$
  \mathrm{det\,}(\mathrm{e}^{ikL} Q \tilde\Sigma(k)- I_{2N}) = 0\,.
$$
\end{theorem}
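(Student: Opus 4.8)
The plan is to reconstruct the secular equation from the plane-wave \emph{Ansatz} on the internal edges together with the \emph{effective} vertex conditions, following a bond-scattering construction of Kottos--Smilansky type; the only new feature compared with the energy-independent case is that once the leads have been eliminated the vertex scattering data carry a $k$-dependence.

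First I would use the definition of a resonance. A resonance eigenfunction $f$ solves $-f''+k^2f=0$ on each edge, so on the $j$-th internal edge, parametrised as $(0,\ell_j)$, one has $f_j(x)=a_j^{\mathrm{in}}\mathrm{e}^{-ikx}+a_j^{\mathrm{out}}\mathrm{e}^{ikx}$, while on each lead $f$ equals $c_j\mathrm{e}^{ikx}$ by assumption; thus the unknown is reduced to the finitely many amplitudes $a_j^{\mathrm{in}},a_j^{\mathrm{out}}$ and $c_j$. At a vertex $v$ carrying $n$ internal edges and $m$ leads, the $m$ lead-components of \eqref{eq-cc} let me eliminate the $c_j$'s attached to $v$ and, as recalled in the Preliminaries, turn the remaining $n$ components into the reduced condition $(\tilde U_v-I)\tilde\Psi_v+i(\tilde U_v+I)\tilde\Psi_v'=0$ with the energy-dependent matrix $\tilde U_v(k)$. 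This presupposes that $(1-k)U_4-(k+1)I$ is invertible; the $k$ violating this form a discrete exceptional set, treated below.

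Next, by the definition of $\tilde\sigma^{(v)}(k)$ and the theorem quoted above expressing it through $\tilde U_v(k)$, the reduced vertex condition at $v$ is equivalent to $\mathbf a_v^{\mathrm{out}}=\tilde\sigma^{(v)}(k)\,\mathbf a_v^{\mathrm{in}}$, where $\mathbf a_v^{\mathrm{in}},\mathbf a_v^{\mathrm{out}}$ are the amplitudes of the waves travelling towards and away from $v$ on the incident internal edges, taken in the local parametrisation with $x=0$ at $v$. Collecting these relations over all vertices and relabelling from the grouping by vertex to the grouping by orientation is exactly the similarity transformation defining $\tilde\Sigma(k)$, so we obtain $\vec\alpha^{\mathrm{out}}=\tilde\Sigma(k)\,\vec\alpha^{\mathrm{in}}$ for the $2N$-vectors of bond amplitudes. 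The remaining relation expresses propagation along the bonds: the wave leaving the tail $w$ of a bond $b$ of length $\ell_b$ towards its head $v$ has, in $w$'s parametrisation, amplitude $\alpha_{\hat b}^{\mathrm{out}}$ and arrives at $v$ with amplitude $\alpha_b^{\mathrm{in}}=\mathrm{e}^{ik\ell_b}\alpha_{\hat b}^{\mathrm{out}}$; since orientation reversal is $Q$ and the phases form $\mathrm{e}^{ikL}$ (commuting with $Q$ because $L$ consists of two identical diagonal blocks), this is $\vec\alpha^{\mathrm{in}}=\mathrm{e}^{ikL}Q\,\vec\alpha^{\mathrm{out}}$. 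Substituting yields $\big(\mathrm{e}^{ikL}Q\tilde\Sigma(k)-I_{2N}\big)\vec\alpha^{\mathrm{in}}=0$, which has a nontrivial solution iff the determinant vanishes; conversely, for $k\neq0$ any nontrivial $\vec\alpha^{\mathrm{in}}$ reconstructs through the above chain a nonzero generalized eigenfunction of resonance type, so the two solution sets agree.

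The algebra above is routine; I expect the real work to be in two places. One is the bookkeeping: fixing the permutation between the two orderings of bonds and checking that $\tilde\Sigma(k)$ — hence the determinant — does not depend on it, since different choices only conjugate the block-diagonal matrix. The other, and the genuine obstacle, is the discrete set of energies at which either $(1-k)U_4-(k+1)I$ or $(1-k)\tilde U(k)-(1+k)I_n$ is singular: there the passage to the inner graph and the formula for $\tilde\sigma$ break down, the determinant may acquire poles, and one needs a separate limiting argument to be sure that no resonance is spuriously created or lost. A complete treatment covering these points is Theorem~4.5 of \cite{Li3}, to which the present statement refers.
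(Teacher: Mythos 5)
Your argument is correct and is essentially the standard bond-scattering derivation that the paper itself delegates to Theorem~4.5 of \cite{Li3}: eliminate the leads to obtain the energy-dependent vertex scattering, encode the vertex conditions as $\vec{\alpha}^{\mathrm{out}}=\tilde\Sigma(k)\vec{\alpha}^{\mathrm{in}}$, and close the system with the propagation relation $\vec{\alpha}^{\mathrm{in}}=\mathrm{e}^{ikL}Q\,\vec{\alpha}^{\mathrm{out}}$, which yields the stated secular determinant. Your explicit flagging of the exceptional energies at which $(1-k)U_4-(k+1)I$ degenerates, and of the permutation bookkeeping, addresses the only genuinely delicate points, and deferring their full treatment to the cited reference is appropriate here.
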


\noindent Next we reformulate this resonance condition using pseudo-orbits. To begin with, we again need some definitions.

\begin{definition}
A \emph{periodic orbit} $\gamma$ on the graph $\Gamma_2$ is a closed path which starts and ends at the same vertex. We denote it using the involved subsequent bonds, $\gamma = (b_1, b_2, \dots, b_n)$ noting that a cyclic permutation of bonds does not change the orbit. A \emph{pseudo-orbit} is a collection of periodic orbits ($\tilde \gamma = \{\gamma_1, \gamma_2, \dots ,\gamma_m \}$). An \emph{irreducible pseudo-orbit} $\bar \gamma$ is a pseudo-orbit, which contains no bond more than once. The \emph{metric length} of a periodic orbit is defined as $\ell_\gamma = \sum_{b_j\in \gamma} \ell_{b_j}$; the length of a pseudo-orbit is the sum of the lengths of all periodic orbits from which it is composed. By $A_{\gamma}$ we denote the product $S_{b_2 b_1} S_{b_3 b_2} \dots S_{b_1 b_n}$ of scattering amplitudes along the periodic orbit $\gamma = (b_1,b_2, \dots b_n)$; here $S_{b_i b_j}$ denotes the entry of the matrix $S$ in the row corresponding to the bond $b_i$ and column corresponding to the bond $b_j$. For a pseudo-orbit we define $A_{\tilde \gamma} = \prod_{\gamma_j\in \tilde\gamma}A_{\gamma_j}$. By $m_{\tilde \gamma}$ we denote the number of periodic orbits in the pseudo-orbit $\tilde \gamma$. By definition, the set of irreducible pseudo-orbits contains also irreducible pseudo-orbit on zero bonds with $m_{\bar \gamma} = 0$, $\ell_{\bar\gamma} = 0$ and $A_{\bar\gamma} = 1$.
\end{definition}

\noindent Armed with these notions we can formulate the theorem on the resonances in terms of the pseudo-orbits which was stated as Theorem~4.7 in \cite{Li3} and the proof of which followed from Theorem~1 in \cite{BHJ}.

\begin{theorem}
The resonance condition of Theorem~\ref{thm: res1} can be restated as
\begin{equation}
  \sum_{\bar \gamma} (-1)^{m_{\bar \gamma}} A_{\bar \gamma}(k) \,\mathrm{e}^{ik\ell_{\bar \gamma}} = 0\,,\label{eq-rescon}
\end{equation}
where the sum runs over all the irreducible pseudo-orbits $\bar \gamma$.
\end{theorem}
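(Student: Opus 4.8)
The plan is to expand the determinant in Theorem~\ref{thm: res1} by the Leibniz formula and then reorganize the resulting sum according to cycle structure, identifying cycles with periodic orbits and collections of disjoint cycles with irreducible pseudo-orbits. Since $2N$ is even, $\mathrm{det\,}(\mathrm{e}^{ikL}Q\tilde\Sigma(k)-I_{2N})=\mathrm{det\,}(I_{2N}-M(k))$ with $M(k):=\mathrm{e}^{ikL}Q\tilde\Sigma(k)=\mathrm{e}^{ikL}S(k)$, so it suffices to expand $\mathrm{det\,}(I_{2N}-M)$. Writing each factor $(I_{2N}-M)_{i\sigma(i)}=\delta_{i\sigma(i)}-M_{i\sigma(i)}$ as the sum of its two terms and grouping the resulting contributions by the subset $P\subseteq\{1,\dots,2N\}$ of indices at which the factor $-M_{i\sigma(i)}$ is chosen, one sees that $\sigma$ must fix the complement of $P$, hence restricts to an arbitrary permutation of $P$. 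This yields
\[
  \mathrm{det\,}(I_{2N}-M)=\sum_{P\subseteq\{1,\dots,2N\}}(-1)^{|P|}\sum_{\sigma\in S_P}\mathrm{sgn}(\sigma)\prod_{i\in P}M_{i\sigma(i)}\,.
\]

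Next I would carry out the combinatorial translation. Because $L$ is diagonal we have $M_{b'b}=\mathrm{e}^{ik\ell_{b'}}S_{b'b}$, and the entry $S_{b'b}$ of $S=Q\tilde\Sigma$ — read through the basis permutation used to define $\tilde\Sigma$ — is nonzero only when the bonds $b,b'$ of $\Gamma_2$ can be traversed consecutively along an oriented closed walk. Decomposing a permutation $\sigma\in S_P$ into disjoint cycles, each cycle with nonvanishing contribution is therefore a periodic orbit $\gamma$ on $\Gamma_2$, and the product of the corresponding entries of $M$ around it equals $\mathrm{e}^{ik\ell_\gamma}A_\gamma$ by the very definitions of $\ell_\gamma$ and $A_\gamma$ — here one has to match the row/column convention in $A_\gamma=S_{b_2b_1}S_{b_3b_2}\cdots S_{b_1b_n}$ with the order in which the factors $M_{i\sigma(i)}$ occur along the cycle. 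A full $\sigma\in S_P$ thus corresponds to a collection of periodic orbits no two of which share a bond, i.e. to an irreducible pseudo-orbit $\bar\gamma$ with bond set $P$, and conversely every irreducible pseudo-orbit arises from exactly one pair $(P,\sigma)$; the case $P=\emptyset$ produces the stipulated zero-bond pseudo-orbit with $m_{\bar\gamma}=0$, $A_{\bar\gamma}=1$, $\ell_{\bar\gamma}=0$.

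It remains to account for the signs. A permutation $\sigma\in S_P$ with $r$ cycles has $\mathrm{sgn}(\sigma)=(-1)^{|P|-r}$, hence $(-1)^{|P|}\mathrm{sgn}(\sigma)=(-1)^r=(-1)^{m_{\bar\gamma}}$, the number of cycles being the number $m_{\bar\gamma}$ of periodic orbits. Substituting into the displayed expansion turns it into $\mathrm{det\,}(I_{2N}-M)=\sum_{\bar\gamma}(-1)^{m_{\bar\gamma}}A_{\bar\gamma}(k)\,\mathrm{e}^{ik\ell_{\bar\gamma}}$, and equating this with zero is exactly \eqref{eq-rescon}. Having recognized $M$ as a length-weighted quantum-graph scattering matrix, one may alternatively just invoke Theorem~1 of \cite{BHJ}, which is precisely this identity.

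The step I expect to be the main obstacle is the bijection underlying the second paragraph: one must check carefully that the sparsity pattern of $S=Q\tilde\Sigma$, expressed in the bond basis $\vec\alpha$, really matches bond adjacency on $\Gamma_2$, so that only genuine closed walks survive the expansion; that the cyclic product $S_{b_2b_1}S_{b_3b_2}\cdots S_{b_1b_n}$ defining $A_\gamma$ coincides with the cycle product produced by the Leibniz formula (up to the choice of starting bond and, possibly, the reading direction, neither of which affects the value); and that ``pairwise bond-disjoint cycles'' is precisely the irreducibility requirement that no bond occur more than once. The determinant expansion itself and the sign count are routine.
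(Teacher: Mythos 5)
Your argument is correct: the expansion of $\det(I_{2N}-M)$ over principal minors, the identification of nonvanishing permutation cycles with periodic orbits (with the row/column and orientation conventions handled as you indicate), and the sign count $(-1)^{|P|}\mathrm{sgn}(\sigma)=(-1)^{m_{\bar\gamma}}$ are exactly the standard proof of this identity. This is the same route the paper takes, since the paper offers no independent argument but simply defers to Theorem~1 of \cite{BHJ}, whose proof is precisely the determinant expansion you reconstruct.
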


\section{Resonance behavior near the eigenvalue}

Let us consider a quantum graph with a general vertex coupling and attached halflines as described above. The resonance condition is given by the pseudo-orbit expansion as (\ref{eq-rescon}); we denote its left-hand side by $F(k(t),t)$ where $t$ will be the parameter governing variation of the graph edge lengths.

\begin{theorem} \label{thm-main}
Let the internal graphs edge lengths $\ell_j = \ell_j(t)$ depend on the parameter $t$ as $C^2$ functions. Suppose that at least some of them are non-constant in the vicinity of $t=0$ and that at that point the system has an eigenvalue $k_0^2>0$ embedded in the continuous spectrum. Then for small enough $t$ the condition (\ref{eq-rescon}) has a unique solution $k^2$, either an embedded eigenvalue or a resonance, and the following holds:
\begin{enumerate}
\item[(i)] $\dot k \in \mathbb{R}$, where dot signifies the derivative with respect to $t$.
\item[(ii)] Furthermore, we have
\begin{equation}
  \dot k \sum_{\bar\gamma}\left(\ell_{\bar\gamma}A_{\bar\gamma}(k)-i\frac{\partial A_{\bar\gamma}(k)}{\partial k}\right)(-1)^{m_{\bar\gamma}}\,\mathrm{e}^{ik\ell_{\bar\gamma}} + k \sum_{\bar\gamma}\dot \ell_{\bar\gamma} (-1)^{m_{\bar\gamma}} A_{\bar\gamma}(k) \,\mathrm{e}^{ik\ell_{\bar\gamma}} = 0\,, \label{eq-kdot}
\end{equation}
and
\begin{eqnarray}
  \ddot k \sum_{\bar\gamma}\left(\ell_{\bar\gamma}A_{\bar\gamma}(k)-i\frac{\partial A_{\bar\gamma}(k)}{\partial k}\right)(-1)^{m_{\bar\gamma}}\,\mathrm{e}^{ik\ell_{\bar\gamma}}
  +2 \dot k \sum_{\bar\gamma} \left(ik\ell_{\bar\gamma}\dot \ell_{\bar\gamma} A_{\bar \gamma}(k)+\dot\ell_{\bar\gamma}A_{\bar\gamma}(k)+k\dot\ell_{\bar\gamma}\frac{\partial A_{\bar\gamma}(k)}{\partial k}\right)(-1)^{m_{\bar\gamma}}\,\mathrm{e}^{ik\ell_{\bar\gamma}}  \nonumber\\
  \hspace{-7mm}+(\dot k)^2\sum_{\bar\gamma}\left(2\ell_{\bar\gamma}\frac{\partial A_{\bar\gamma}(k)}{\partial k}-i \frac{\partial^2 A_{\bar\gamma}(k)}{\partial k^2}+i \ell_{\bar\gamma}^2 A_{\bar\gamma}(k)\right)(-1)^{m_{\bar\gamma}}\,\mathrm{e}^{ik\ell_{\bar\gamma}}
  +k \sum_{\bar\gamma}(\ddot\ell_{\bar\gamma}+ik(\dot\ell_{\bar\gamma})^2)A_{\bar\gamma}(k)(-1)^{m_{\bar\gamma}}\,\mathrm{e}^{ik\ell_{\bar\gamma}} = 0\,. \label{eq-kddot}
\end{eqnarray}
\end{enumerate}
\end{theorem}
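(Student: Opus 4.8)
The plan is to treat the resonance condition $F(k(t),t)=0$ as an implicit equation and to extract the first two derivatives of $k$ at $t=0$ by differentiating it twice, the key point being that the ``principal'' coefficient — the partial derivative $\partial F/\partial k$ evaluated at the embedded eigenvalue — is nonzero, which simultaneously yields existence and uniqueness of the solution and the two displayed formulae. So first I would record that $F(k,t)=\sum_{\bar\gamma}(-1)^{m_{\bar\gamma}}A_{\bar\gamma}(k)\,\mathrm{e}^{ik\ell_{\bar\gamma}(t)}$ is jointly $C^2$ in $(k,t)$ near $(k_0,0)$ (the entries of $S(k)$, hence the $A_{\bar\gamma}$, are rational in $k$ with no pole at $k_0$, and $\ell_j(t)\in C^2$), and that $F(k_0,0)=0$ because the embedded eigenvalue solves \eqref{eq-rescon}. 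Next I would compute $\partial F/\partial k$ at $(k_0,0)$: using $\frac{\partial}{\partial k}\big(A_{\bar\gamma}\mathrm{e}^{ik\ell_{\bar\gamma}}\big)=\big(\frac{\partial A_{\bar\gamma}}{\partial k}+i\ell_{\bar\gamma}A_{\bar\gamma}\big)\mathrm{e}^{ik\ell_{\bar\gamma}}$ this is exactly $-i\sum_{\bar\gamma}(-1)^{m_{\bar\gamma}}\big(\ell_{\bar\gamma}A_{\bar\gamma}-i\frac{\partial A_{\bar\gamma}}{\partial k}\big)\mathrm{e}^{ik\ell_{\bar\gamma}}$, i.e. (up to the factor $-i$) the bracket multiplying $\dot k$ and $\ddot k$ in \eqref{eq-kdot}–\eqref{eq-kddot}. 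The crucial analytic input is that this quantity is nonzero at $k_0$; I expect this to be the main obstacle, and I would argue it from the fact that at an embedded eigenvalue the relevant scattering-type determinant has a simple zero — equivalently, that $F$, being essentially a secular determinant of the form $\det(\mathrm{e}^{ikL}Q\tilde\Sigma(k)-I)$ by the earlier theorems, has nonvanishing $k$-derivative there because $\mathrm{e}^{ik_0L}Q\tilde\Sigma(k_0)$ has $1$ as a geometrically (hence, by the structure, algebraically) simple eigenvalue. Granting this, the implicit function theorem gives a unique $C^2$ branch $t\mapsto k(t)$ with $k(0)=k_0$.

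With existence in hand, part (i) follows by differentiating $F(k(t),t)=0$ once, which gives precisely \eqref{eq-kdot} after dividing by $\mathrm{e}^{ik\ell_{\bar\gamma}}$ bookkeeping — here I use $\frac{\mathrm d}{\mathrm dt}\big(A_{\bar\gamma}(k(t))\mathrm{e}^{ik(t)\ell_{\bar\gamma}(t)}\big)=\big[\dot k(\frac{\partial A_{\bar\gamma}}{\partial k}+i\ell_{\bar\gamma}A_{\bar\gamma})+ik\dot\ell_{\bar\gamma}A_{\bar\gamma}\big]\mathrm{e}^{ik\ell_{\bar\gamma}}$ and regroup to match the two sums in \eqref{eq-kdot}. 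To conclude $\dot k\in\mathbb R$, I would solve \eqref{eq-kdot} for $\dot k$ and show the resulting ratio is real at $k_0$: the second sum $k\sum_{\bar\gamma}\dot\ell_{\bar\gamma}(-1)^{m_{\bar\gamma}}A_{\bar\gamma}\mathrm{e}^{ik\ell_{\bar\gamma}}$ equals $-k\,\partial F/\partial(\text{lengths})$ contracted with $\dot\ell$, and the point is that both it and the bracket coefficient of $\dot k$ are, at the real point $k_0$ with $k_0^2$ an eigenvalue of the \emph{self-adjoint} inner problem, expressible through quantities tied to the $L^2$-eigenfunction; concretely one uses that the generalized eigenfunction at $k_0$ is genuinely square-integrable (no outgoing component, $c_j=0$ on the leads), so that a Wronskian/normalization identity forces the relevant combinations to have a common phase. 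I would phrase this via the standard fact that $\frac{\partial}{\partial k}\log\det(\cdots)$ at an embedded eigenvalue is, up to real positive factors, $\|f\|^2_{L^2}$-type, and similarly $\partial_{\ell_j}$ produces $|f|^2$ or $|f'|^2$ boundary terms on edge $e_j$ — all real — whence $\dot k$ is real.

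For part (ii), \eqref{eq-kdot} is already derived as above; \eqref{eq-kddot} is obtained by differentiating $F(k(t),t)=0$ a second time and, again, stripping the common exponential factors. The computation is routine but bulky: one differentiates each term $\big[\dot k(\frac{\partial A_{\bar\gamma}}{\partial k}+i\ell_{\bar\gamma}A_{\bar\gamma})+ik\dot\ell_{\bar\gamma}A_{\bar\gamma}\big]\mathrm{e}^{ik\ell_{\bar\gamma}}$ once more in $t$, producing a $\ddot k$-term with the same bracket coefficient as before, a $\dot k$-term (from differentiating the bracket and the exponent, giving the $ik\ell_{\bar\gamma}\dot\ell_{\bar\gamma}A_{\bar\gamma}+\dot\ell_{\bar\gamma}A_{\bar\gamma}+k\dot\ell_{\bar\gamma}\frac{\partial A_{\bar\gamma}}{\partial k}$ combination, times $2\dot k$ after collecting the two ways the $\dot k$ appears), a $(\dot k)^2$-term ($2\ell_{\bar\gamma}\frac{\partial A_{\bar\gamma}}{\partial k}-i\frac{\partial^2 A_{\bar\gamma}}{\partial k^2}+i\ell_{\bar\gamma}^2A_{\bar\gamma}$, from the second $k$-derivatives and the squared exponent), and finally the pure-length term $k(\ddot\ell_{\bar\gamma}+ik(\dot\ell_{\bar\gamma})^2)A_{\bar\gamma}$. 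Matching these against \eqref{eq-kddot} term by term finishes the proof. The only nontrivial steps, as flagged, are the non-degeneracy of $\partial F/\partial k$ at $k_0$ (needed for IFT and for solving \eqref{eq-kddot} for $\ddot k$) and the reality argument in (i); everything else is Leibniz-rule bookkeeping.
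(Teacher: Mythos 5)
Your treatment of part (ii) is exactly what the paper does: the left-hand side of (\ref{eq-kdot}) is $-i\,\mathrm{d}F(k(t),t)/\mathrm{d}t$ and (\ref{eq-kddot}) is obtained by differentiating once more in $t$; the Leibniz bookkeeping you describe matches the displayed formul\ae. The genuine gap is in part (i). The paper does not prove $\dot k\in\mathbb{R}$ by showing that the ratio extracted from (\ref{eq-kdot}) is real via determinant or Wronskian identities. It uses a short spectral argument: if $\mathrm{Im}\,\dot k=c$ with $|c|>0$, then $k(t)=k_0+\dot k\,t+\mathcal{O}(t^2)$ satisfies $\mathrm{Im}\,k>0$ for $t$ in a one-sided neighbourhood of $0$; since the generalized eigenfunction behaves as $c_j\mathrm{e}^{ikx}$ on the leads, it is then square integrable, so the non-real number $k^2$ would be an eigenvalue of the self-adjoint Hamiltonian, a contradiction. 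Your alternative route (``$\partial_k\log\det(\cdots)$ is of $\|f\|^2$-type'', ``a Wronskian/normalization identity forces a common phase'') is only gestured at: you neither state nor verify the identity you would need, and for a general unitary vertex coupling the coefficients $A_{\bar\gamma}(k)$ are genuinely $k$-dependent and complex, so the reality of that ratio is precisely what has to be proved, not assumed. As written, part (i) is not established; the self-adjointness argument is both shorter and actually closes the claim.

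A secondary point: you correctly flag that the implicit function theorem requires $\partial F/\partial k|_{(k_0,0)}\neq 0$, i.e.\ the nonvanishing of the bracket multiplying $\dot k$ and $\ddot k$; the paper passes over this silently, so raising it is legitimate. However, your proposed justification, that $1$ is a geometrically and ``by the structure'' algebraically simple eigenvalue of $\mathrm{e}^{ik_0L}Q\tilde\Sigma(k_0)$, is unsubstantiated: embedded eigenvalues of quantum graphs can be degenerate, and neither geometric nor algebraic simplicity follows from the stated hypotheses. Flagging the issue is useful, but your resolution does not close it; one would need either an explicit simplicity assumption or a separate argument.
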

\begin{proof}
(i) By the implicit function theorem, the condition (\ref{eq-rescon}) is for small $t$ solved by
$$
  k = k_0 + \dot k t + \mathcal{O}(t^2)\,.
$$
Suppose that $\mathrm{Im\,}\dot k|_{k_0} = c$ holds with $|c|>0$. Then we have two possibilities, either the solution corresponds to a wavefunction localized on the inner part of $\Gamma$, or it has components on the leads. In the latter case, however, there is an $\varepsilon>0$ such that $\mathrm{Im}\,k>0$ for either $t\in(0,\varepsilon)$ or $t\in(-\varepsilon,0)$. Since $|k_0|>0$ and the wavefunction components on the leads are then square integrable, in both cases we conclude that there is a non-real eigenvalue at $k^2$ which contradicts, of course, to self-adjointness of the Hamiltonian. \\[.2em]
(ii)  The left-hand side of eq. (\ref{eq-kdot}) is obtained from (\ref{eq-rescon}) as $-i \frac{\mathrm{d} F(k(t),t)}{\mathrm{d} t}$, the relation (\ref{eq-kddot}) then follows by taking the derivative of (\ref{eq-kdot}) with respect to $t$.
\end{proof}

This result has the following corollary, which simplifies the computation of $\dot k$ and $\mathrm{Im\,}\ddot k$ for a certain family of coupling conditions which contains, in particular, the standard coupling.

\begin{corollary}
Let all the $A_{\bar\gamma}$ be $k$-independent and real. Then we have
$$
  \dot k = -k\, \frac{\sum_{\bar\gamma}\dot \ell_{\bar\gamma}(-1)^{m_{\bar\gamma}}A_{\bar\gamma}\cos{k\ell_{\bar\gamma}}}
  {\sum_{\bar\gamma}\ell_{\bar\gamma}(-1)^{m_{\bar\gamma}}A_{\bar\gamma}\cos{k\ell_{\bar\gamma}}}\,,
$$
\begin{eqnarray*}
  \mathrm{Im\,}\ddot k = -\frac{\sum_{\bar\gamma}\ell_{\bar\gamma}(-1)^{m_{\bar\gamma}}A_{\bar\gamma}\cos{k\ell_{\bar\gamma}}}{d}
  \left[2\dot k \sum_{\bar\gamma} (k \dot \ell_{\bar\gamma}\ell_{\bar\gamma}\cos{k\ell_{\bar\gamma}}+\dot \ell_{\bar\gamma}\sin{k\ell_{\bar\gamma}})A_{\bar\gamma}(-1)^{m_{\bar\gamma}} \right.\\
  \left.+(\dot k)^2\sum_{\bar\gamma}\ell_{\bar\gamma}^2(-1)^m_{\bar\gamma}A_{\bar\gamma}\cos{k\ell_{\bar\gamma}}+k \sum_{\bar\gamma}(k(\dot\ell_{\bar\gamma})^2\cos{k\ell_{\bar\gamma}}
  +\ddot\ell_{\bar\gamma}\sin{k\ell_{\bar\gamma}})(-1)^{m_{\bar\gamma}}A_{\bar\gamma}\right] \\
  +\frac{\sum_{\bar\gamma}\ell_{\bar\gamma}(-1)^{m_{\bar\gamma}}A_{\bar\gamma}\sin{k\ell_{\bar\gamma}}}{d}
  \left[2\dot k \sum_{\bar\gamma} (-k \dot \ell_{\bar\gamma}\ell_{\bar\gamma}\sin{k\ell_{\bar\gamma}}+\dot \ell_{\bar\gamma}\cos{k\ell_{\bar\gamma}})A_{\bar\gamma}(-1)^{m_{\bar\gamma}} \right.\\
  \left.-(\dot k)^2\sum_{\bar\gamma}\ell_{\bar\gamma}^2(-1)^m_{\bar\gamma}A_{\bar\gamma}\sin{k\ell_{\bar\gamma}}+k \sum_{\bar\gamma}(\ddot\ell_{\bar\gamma}\cos{k\ell_{\bar\gamma}}
  -k(\dot\ell_{\bar\gamma})^2\sin{k\ell_{\bar\gamma}})(-1)^{m_{\bar\gamma}}A_{\bar\gamma}\right]\,,
\end{eqnarray*}
\begin{eqnarray*}
  \mathrm{Re\,}\ddot k = -\frac{\sum_{\bar\gamma}\ell_{\bar\gamma}(-1)^{m_{\bar\gamma}}A_{\bar\gamma}\sin{k\ell_{\bar\gamma}}}{d}
  \left[2\dot k \sum_{\bar\gamma} (k \dot \ell_{\bar\gamma}\ell_{\bar\gamma}\cos{k\ell_{\bar\gamma}}+\dot \ell_{\bar\gamma}\sin{k\ell_{\bar\gamma}})A_{\bar\gamma}(-1)^{m_{\bar\gamma}} \right.\\
  \left.+(\dot k)^2\sum_{\bar\gamma}\ell_{\bar\gamma}^2(-1)^m_{\bar\gamma}A_{\bar\gamma}\cos{k\ell_{\bar\gamma}}+k \sum_{\bar\gamma}(k(\dot\ell_{\bar\gamma})^2\cos{k\ell_{\bar\gamma}}
  +\ddot\ell_{\bar\gamma}\sin{k\ell_{\bar\gamma}})(-1)^{m_{\bar\gamma}}A_{\bar\gamma}\right] \\
  -\frac{\sum_{\bar\gamma}\ell_{\bar\gamma}(-1)^{m_{\bar\gamma}}A_{\bar\gamma}\cos{k\ell_{\bar\gamma}}}{d}
  \left[2\dot k \sum_{\bar\gamma} (-k \dot \ell_{\bar\gamma}\ell_{\bar\gamma}\sin{k\ell_{\bar\gamma}}+\dot \ell_{\bar\gamma}\cos{k\ell_{\bar\gamma}})A_{\bar\gamma}(-1)^{m_{\bar\gamma}} \right.\\
  \left.-(\dot k)^2\sum_{\bar\gamma}\ell_{\bar\gamma}^2(-1)^m_{\bar\gamma}A_{\bar\gamma}\sin{k\ell_{\bar\gamma}}+k \sum_{\bar\gamma}(\ddot\ell_{\bar\gamma}\cos{k\ell_{\bar\gamma}}
  -k(\dot\ell_{\bar\gamma})^2\sin{k\ell_{\bar\gamma}})(-1)^{m_{\bar\gamma}}A_{\bar\gamma}\right]\,,
\end{eqnarray*}
where
$$
  d := \left({\sum_{\bar\gamma}\ell_{\bar\gamma}(-1)^{m_{\bar\gamma}}A_{\bar\gamma}\cos{k\ell_{\bar\gamma}}}\right)^2 + \left({\sum_{\bar\gamma}\ell_{\bar\gamma}(-1)^{m_{\bar\gamma}}A_{\bar\gamma}\sin{k\ell_{\bar\gamma}}}\right)^2\,.
$$
\end{corollary}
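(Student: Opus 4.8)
The plan is to specialize the two identities (\ref{eq-kdot}) and (\ref{eq-kddot}) of Theorem~\ref{thm-main} to the point $t=0$, where $k=k_0\in\mathbb{R}$, using the hypothesis that every $A_{\bar\gamma}$ is a real constant. Then $\partial A_{\bar\gamma}/\partial k=\partial^2 A_{\bar\gamma}/\partial k^2=0$, while the remaining data $\ell_{\bar\gamma},\dot\ell_{\bar\gamma},\ddot\ell_{\bar\gamma},k_0,\dot k$ are all real, the last of them by part~(i) of the theorem. Under this substitution (\ref{eq-kdot}) collapses to $\dot k\sum_{\bar\gamma}\ell_{\bar\gamma}(-1)^{m_{\bar\gamma}}A_{\bar\gamma}\mathrm{e}^{ik\ell_{\bar\gamma}}+k\sum_{\bar\gamma}\dot\ell_{\bar\gamma}(-1)^{m_{\bar\gamma}}A_{\bar\gamma}\mathrm{e}^{ik\ell_{\bar\gamma}}=0$; writing $\mathrm{e}^{ik\ell_{\bar\gamma}}=\cos k\ell_{\bar\gamma}+i\sin k\ell_{\bar\gamma}$ and taking the real part --- which is legitimate exactly because $\dot k$ is real --- yields the asserted closed form for $\dot k$, provided the denominator $\sum_{\bar\gamma}\ell_{\bar\gamma}(-1)^{m_{\bar\gamma}}A_{\bar\gamma}\cos k\ell_{\bar\gamma}$ is nonzero; this is part of the nondegeneracy already assumed in Theorem~\ref{thm-main}, since there $\partial F/\partial k$ reduces to $i\sum_{\bar\gamma}\ell_{\bar\gamma}(-1)^{m_{\bar\gamma}}A_{\bar\gamma}\mathrm{e}^{ik\ell_{\bar\gamma}}$ (in the borderline case where only the imaginary part survives one uses the companion sine relation instead).

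For the second derivative I would proceed in the same way in (\ref{eq-kddot}): after deleting every term containing a derivative of an $A_{\bar\gamma}$, the identity becomes $\ddot k\,(C+iS)+N=0$, where I set $C+iS:=\sum_{\bar\gamma}\ell_{\bar\gamma}(-1)^{m_{\bar\gamma}}A_{\bar\gamma}\mathrm{e}^{ik\ell_{\bar\gamma}}$ with $C,S$ its real and imaginary parts --- so that the denominator $d$ of the statement is precisely $|C+iS|^2=C^2+S^2$ --- and $N$ is the explicit sum of the three surviving groups of terms of (\ref{eq-kddot}): the group linear in $\dot k$, the group quadratic in $\dot k$, and the group carrying $\ddot\ell_{\bar\gamma}$ and $\dot\ell_{\bar\gamma}^2$. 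Hence $\ddot k=-N/(C+iS)=-N(C-iS)/d$.

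It then remains only to split $N$ into its real and imaginary parts. Since $\dot k,k,\ell_{\bar\gamma},\dot\ell_{\bar\gamma},\ddot\ell_{\bar\gamma},A_{\bar\gamma}$ are all real, expanding $\mathrm{e}^{ik\ell_{\bar\gamma}}=\cos k\ell_{\bar\gamma}+i\sin k\ell_{\bar\gamma}$ in each of the three groups and collecting terms gives $N=Y+iX$, where $X$ is exactly the bracket multiplied by $-\frac1d\sum_{\bar\gamma}\ell_{\bar\gamma}(-1)^{m_{\bar\gamma}}A_{\bar\gamma}\cos k\ell_{\bar\gamma}$ in the claimed formula for $\mathrm{Im\,}\ddot k$, and $Y$ is the bracket multiplied by $+\frac1d\sum_{\bar\gamma}\ell_{\bar\gamma}(-1)^{m_{\bar\gamma}}A_{\bar\gamma}\sin k\ell_{\bar\gamma}$ there. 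Plugging $N=Y+iX$ into $\ddot k=-(Y+iX)(C-iS)/d$ and separating real and imaginary parts gives $\mathrm{Re\,}\ddot k=-(YC+XS)/d$ and $\mathrm{Im\,}\ddot k=-(XC-YS)/d$, which, once $X$, $Y$, $C$ and $S$ are written out, are exactly the two displayed identities. The computation is routine; the only delicate point --- and the one I would treat as the main obstacle --- is the bookkeeping in this real/imaginary split, namely correctly pairing $\mathrm{Re\,}N$ and $\mathrm{Im\,}N$ with the bracketed expressions $X$ and $Y$, keeping track of which trigonometric function and which sign each of the $ik\ell_{\bar\gamma}\dot\ell_{\bar\gamma}$, $i\ell_{\bar\gamma}^2$ and $ik\dot\ell_{\bar\gamma}^2$ factors produces, together with the minor matter of recording the hypothesis $d\neq0$ under which the division is valid.
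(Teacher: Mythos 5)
Your proposal is correct and follows exactly the paper's own (very terse) argument: specialize (\ref{eq-kdot}) and (\ref{eq-kddot}) using $\dot k, A_{\bar\gamma}\in\mathbb{R}$ and $\partial A_{\bar\gamma}/\partial k=0$, take the real part to get $\dot k$, and divide by $C+iS$ (i.e.\ multiply by $(C-iS)/d$) and separate real and imaginary parts to get $\ddot k$. Your bookkeeping of $X=\mathrm{Im\,}N$ and $Y=\mathrm{Re\,}N$ checks out against the displayed formulas, and your explicit remark on the nonvanishing of the denominator is a point the paper leaves implicit.
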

\begin{proof}
The expression for $\dot k$ follows from real part of eq. (\ref{eq-kdot}), where we take to account that $\dot k, A_{\bar\gamma}\in \mathbb{R}$. The second equation is obtained from (\ref{eq-kddot}) with $\dot k, A_{\bar\gamma}\in \mathbb{R}$ and $\frac{\partial A_{\bar\gamma}}{\partial k} = 0$ taken into account.
\end{proof}

\section{Examples}\label{sec-ex1}

\subsection{Circle with two leads and a $\delta$ coupling}  \label{subsec-ex1}
\begin{figure}
\centering
\includegraphics[height=4cm]{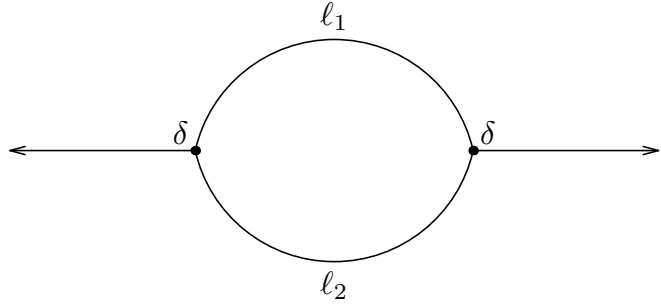}
\caption{A circle with two leads considered in subsection~\ref{subsec-ex1}.}
\label{fig10}
\end{figure}

Consider first the same example which appeared in Figure~2 of \cite{LZ}, cf. Figure~\ref{fig10}). The graph consists of two vertices and two internal edges of lengths $\ell_1(t)$, $\ell_2(t)$, where $\ell_1(0) = \ell_2(0) = \ell$, which connect the two vertices. In each vertex there is one lead attached. Unlike \cite{LZ} we consider $\delta$-coupling of the same strength $\alpha$ at both the vertices. Let the function on both internal edges be $f_1(x)$ and $f_2(x)$, on the external edges $g_1(x)$, $g_2(x)$. Then the coupling conditions are
\begin{eqnarray*}
  f_1(0) = f_2(0) = g_1(0)\,,&\quad & f_1'(0) + f_2'(0)+g_1'(0) = \alpha f_1(0)\,,\\
  f_1(\ell_1) = f_2(\ell_2) = g_2(0)\,,&\quad & -f_1'(\ell_1)-f_2'(\ell_2)+g_2'(0) = \alpha f_1(\ell_1)\,.
\end{eqnarray*}
The resonance condition can be derived as in \cite{Li3}. The effective vertex-scattering matrix is
$$
  \tilde \sigma(k) = \frac{1}{3-\frac{\alpha}{ik}} \left(\begin{array}{cc}\frac{\alpha}{ik}-1&2\\2&\frac{\alpha}{ik}-1\end{array}\right)\,,
$$
which yields
$$
  1+\frac{\left(\frac{\alpha}{ik}-1\right)^2}{\left(\frac{\alpha}{ik}-3\right)^2}\left(\mathrm{e}^{2ik\ell_1}+\mathrm{e}^{2ik\ell_2}\right)(-1)
  + 2\frac{2^2}{\left(\frac{\alpha}{ik}-3\right)^2} \mathrm{e}^{ik(\ell_1+\ell_2)} (-1) + \frac{\left(\frac{\alpha}{ik}+1\right)^2}{\left(\frac{\alpha}{ik}-3\right)^2} \mathrm{e}^{2ik(\ell_1+\ell_2)} = 0\,.
$$
This can be further rewritten by redefining the coefficients $A_{\bar\gamma}(k)$ as
\begin{eqnarray*}
  F(k(t),t) = (\alpha-3ik)^2-(\alpha-ik)^2\left(\mathrm{e}^{2ik\ell_1}+\mathrm{e}^{2ik\ell_2}\right) \\
  +8k^2 \mathrm{e}^{ik(\ell_1+\ell_2)}+(\alpha+ik)^2\mathrm{e}^{2ik(\ell_1+\ell_2)} = 0\,.
\end{eqnarray*}
Using Theorem~\ref{thm-main}, or alternatively computing the derivatives of the resonance condition at $t=0$ directly, we get
$$
  -16 \dot k k \ell (\alpha-ik)+8(\dot \ell_1+\dot \ell_2) k^2 (-\alpha+ik) = 0\,,
$$
\begin{eqnarray*}
  \ddot k 2k\ell(\alpha-ik)+\dot k k (\dot \ell_1+\dot \ell_2)(\alpha^2\ell+3k^2\ell+4\alpha+6ik\alpha\ell-6ik)+(\dot k)^2(\alpha^2\ell^2+3k^2\ell^2+4\alpha\ell-8ik\ell+6ik\ell^2\alpha)\\
  +(\ddot\ell_1+\ddot \ell_2)k^2(\alpha-ik)  +(\dot\ell_1^2+\dot \ell_2^2)k^3(k+2i\alpha)+\dot\ell_1\dot\ell_2 k^2(\alpha^2+k^2+2ik\alpha) = 0  \,.
\end{eqnarray*}
From these equations we find
$$
  \dot k = - \frac{(\dot \ell_1+\dot \ell_2)k}{2\ell}\,,
$$
\begin{eqnarray*}
   \mathrm{Re\,}\ddot k = -\frac{1}{2k\ell(\alpha^2+k^2)}\left[\dot k k(\dot\ell_1+\dot\ell_2)(\alpha^3\ell+4\alpha^2+6k^2-3k^2\alpha\ell)
   +(\dot k)^2(\alpha^3\ell^2+4\alpha^2\ell-3k^2\alpha\ell^2+8k^2\ell)+\right.\\
   \left.+(\ddot\ell_1+\ddot \ell_2)k^2(\alpha^2+k^2)
   -(\dot\ell_1^2 +\dot\ell_2^2)k^4\alpha+\dot\ell_1\dot\ell_2 k^2\alpha(\alpha^2-k^2)\right]\,.
\end{eqnarray*}
\begin{eqnarray*}
   \mathrm{Im\,}\ddot k = -\frac{1}{2k\ell(\alpha^2+k^2)}\left[\dot k k^2(\dot\ell_1+\dot\ell_2)(7\alpha^2\ell+3k^2\ell-2\alpha)\right.\\
   \left.(\dot k)^2k(7\alpha^2\ell^2+3k^2\ell^2-4\alpha\ell)+(\dot\ell_1^2+\dot\ell_2^2)k^3(2\alpha^2+k^2)+\dot\ell_1\dot\ell_2 k^3(3\alpha^2+k^2)\right]\,.
\end{eqnarray*}
These formul\ae\ determine the weak-perturbation asymptotics of the resonance trajectory in Figure~\ref{fig1}. This plot shows that unlike the case with standard coupling the knowledge of $\mathrm{Re\,}\ddot k$ is needed to approximate the resonance trajectory.
\begin{figure}
\centering
\begin{minipage}{0.95\textwidth}
\centering
\includegraphics[height=4cm]{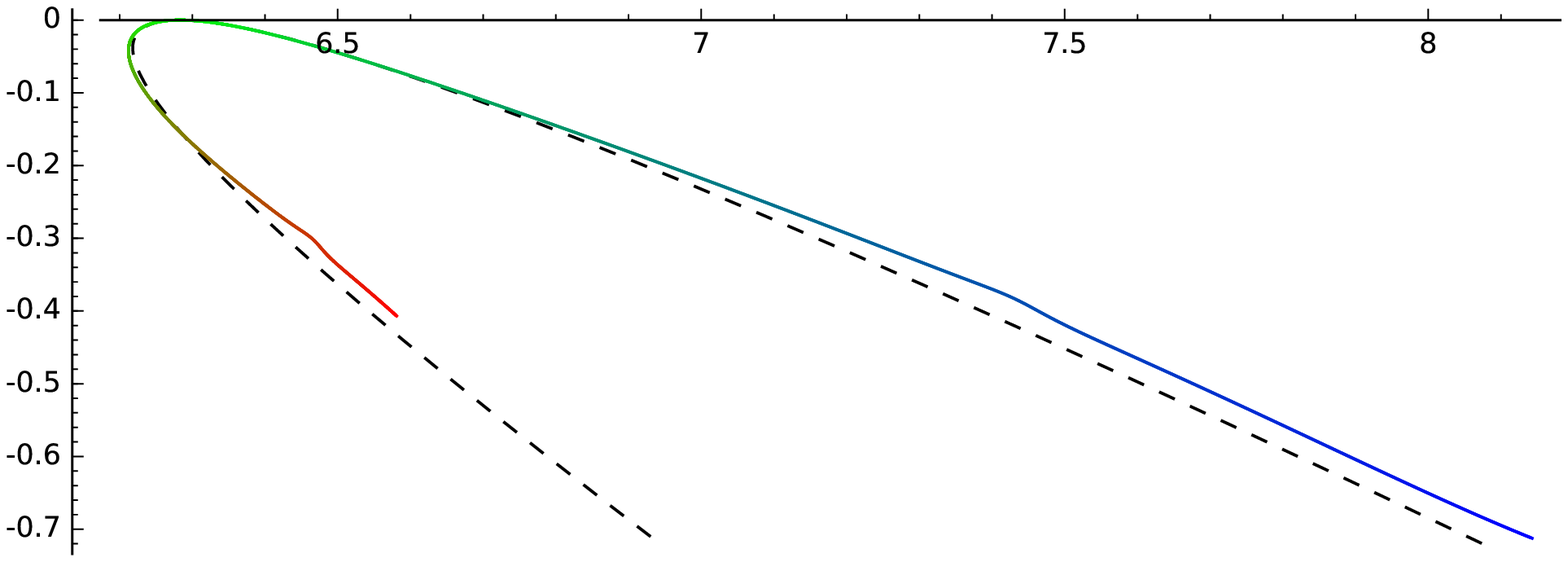}
\end{minipage}\hfill
\begin{minipage}{0.05\textwidth}
\centering
\includegraphics[height=4cm]{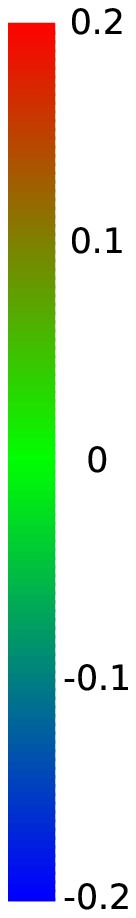}
\end{minipage}
\caption{The resonance trajectory for the graph in subsection~\ref{subsec-ex1} coming from the eigenvalue with $k_0 = 2\pi$, $\ell_1 = 1-t$, $\ell_2 = 1+2t$, $\alpha = 10$. The trajectory is shown for $t \in (-0.2, 0.2)$ and it is approximated by the dashed curve $k = k_0 + t \dot k + \frac{t^2}{2} \mathrm{Re\,}\ddot k +\frac{i t^2}{2} \mathrm{Im\,}\ddot k$ with $\dot k = -\pi$, $\mathrm{Re\,}\ddot k = 75.61$, $\mathrm{Im\,}\ddot k = -44.41$. (Color online.)}
\label{fig1}
\end{figure}

\subsection{Cross-shaped resonator}\label{subsec-ex2}

\begin{figure}
\centering
\includegraphics[height=5cm]{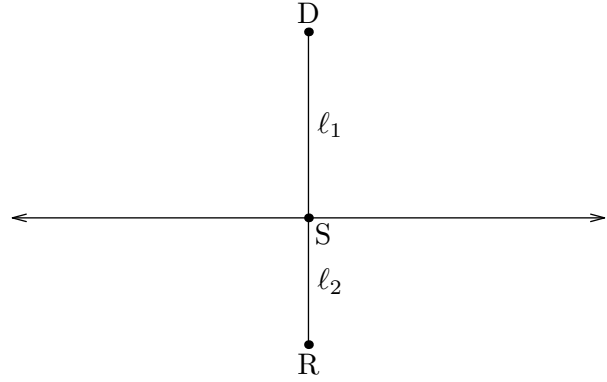}
\caption{The cross-shaped resonator considered in subsection~\ref{subsec-ex2}.}
\label{fig11}
\end{figure}

Consider next the example analyzed in subsection 4.2 of \cite{EL2}, cf. Figure~\ref{fig11}, but with different vertex conditions. The graph consists of two internal edges and two leads connected at one central vertex with the standard coupling. There is Dirichlet condition at the loose end of the internal edge of length $\ell_1$ and Robin condition $f'(0) = \alpha f(0)$ at the loose end of the second internal edge of length $\ell_2$. We choose the Ansatz $f_1(x) = a_1\sin{kx}$ at the first internal edge and $f_2 (x) = a_2 \sin{kx}+b_2\cos{kx}$ at the second one. From the Robin condition we have $a_2 k = \alpha b_2$ and hence $f_2(x) = b_2 \left(\frac{\alpha}{k}\sin{kx}+\cos{kx}\right)$. On the leads we choose the Ansatz $g_j = c_j\,\mathrm{e}^{ikx}$, $j = 1,2$. The coupling condition in the central vertex gives
\begin{eqnarray*}
  a_1\sin{k\ell_1} = b_2 \left(\frac{\alpha}{k}\sin{k\ell_2}+\cos{k\ell_2}\right) = c_1=c_2\,,\\
  -a_1k\cos{k\ell_1}-b_2\left(\alpha\cos{k\ell_2}-k\sin{k\ell_2}\right)+ik(c_1+c_2) = 0 \,.
\end{eqnarray*}
From this system of equations we obtain the resonance condition as the condition of solvability of this system.
$$
  k\cos{k\ell_1}\cos{k\ell_2}+(\alpha-2ik)\sin{k\ell_1}\cos{k\ell_2}+\alpha\cos{k\ell_1}\sin{k\ell_2}
  -(2i\alpha+k)\sin{k\ell_1}\sin{k\ell_2} = 0\,.
$$
 
\begin{figure}
\centering
\begin{minipage}{0.85\textwidth}
\centering
\includegraphics[height=8cm]{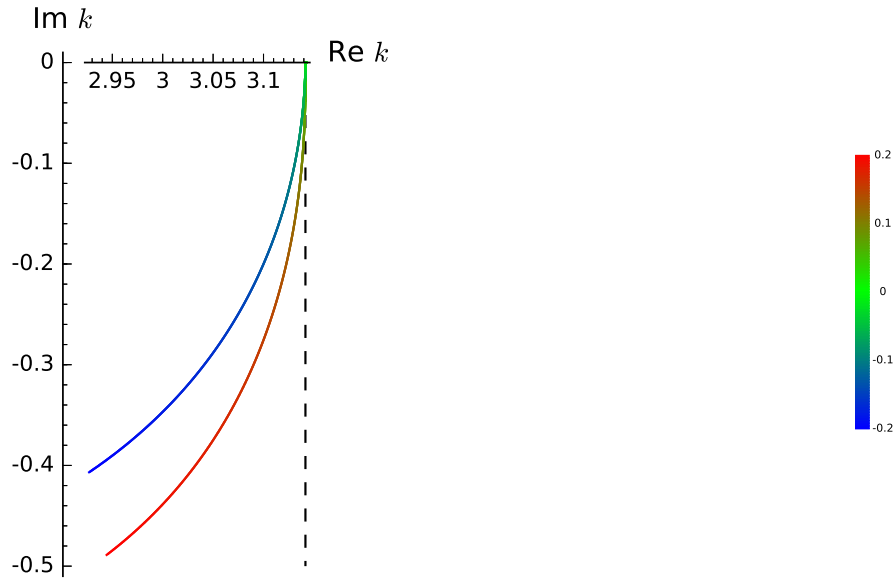}
\end{minipage}\hfill
\begin{minipage}{0.15\textwidth}
\centering
\includegraphics[height=4cm]{fig1b}
\end{minipage}
\caption{The resonance trajectory for the graph in subsection~\ref{subsec-ex2} coming from the eigenvalue with $k_0 = \pi$, $\ell_1 = 1-t$, $\ell_2 = 0.74266+t$, $\alpha = 3$. The trajectory is shown for $t \in (-0.2, 0.2)$ and it is approximated by the dashed curve $k = k_0 + t \dot k + \frac{t^2}{2} \mathrm{Re\,}\ddot k +\frac{i t^2}{2} \mathrm{Im\,}\ddot k$ with $\dot k = 0$, $\mathrm{Re\,}\ddot k = 0$, $\mathrm{Im\,}\ddot k = -20.76$. (Color online.)}
\label{fig9}
\end{figure}

The system has a real eigenvalue if the wavefunction components on the leads are zero and the components on the internal edges vanish at the central vertex. From that we obtain $\sin{k\ell_1} = 0$, and therefore $k\ell_1 = n\pi$, $n\in\mathbb{Z}$. From the second edge we have $\frac{\alpha}{k}\sin{k\ell_2}+\cos{k\ell_2} = 0$. Hence the lengths of the edges must satisfy
$$
  \ell_2 = -\frac{\ell_1}{n\pi} \mathrm{arccot\,}{\frac{\alpha\ell_1}{n\pi}}\,.
$$
The resonance trajectory is shown in Figure~\ref{fig9} for $\alpha = 3$ and the starting values $\ell_1=1$, $\ell_2 = 0.74266$, $k_0 = \pi$.

\section{High-energy asymptotics of resonances for the $\delta$ and $\delta_{\mathrm{s}}'$ coupling}

Let us now address the second problem mentioned in the introduction, namely the asymptotics of the resonances for $\delta$ and $\delta_{\mathrm{s}}'$-coupled leads in the high-energy regime. The resonances form an infinite sequence which we order with respect to the real parts of the pole positions in the ascending order; we will ask about its limiting behavior. For the sake of brevity we shall speak of the limits as the real parts tend to infinity.  

\begin{theorem}  \label{thm-delta}
Consider a graph $\Gamma$ with a $\delta$-coupling at all the vertices. Its resonances converge to the resonances of the same graph with the standard conditions as their real parts tend to infinity.
\end{theorem}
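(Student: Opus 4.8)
The plan is to show that, at high energy, the pseudo-orbit resonance function~(\ref{eq-rescon}) of the $\delta$-coupled graph is a vanishingly small perturbation of the one belonging to the standard coupling, and then to pin down the zeros by a Rouch\'e-type comparison. First I would compute the effective vertex-scattering matrix. Inserting the $\delta$-coupling matrix $U=\frac{2}{d+i\alpha}J-I$ into the reduction $\tilde U_v(k)$ and then into the formula relating $\tilde\sigma$ and $\tilde U$ — precisely the computation behind the matrix displayed in subsection~\ref{subsec-ex1} — one obtains, for a vertex of total degree $d_v$ at which $n_v$ internal edges meet,
$$
 \tilde\sigma^{(v)}(k)=\frac{2ik}{d_v\,ik-\alpha_v}\,J_{n_v}-I_{n_v}\,.
$$
Hence $\tilde\sigma^{(v)}(k)=\tilde\sigma^{(v)}_{\mathrm{std}}+\mathcal{O}(k^{-1})$ as $|k|\to\infty$, where $\tilde\sigma^{(v)}_{\mathrm{std}}=\frac{2}{d_v}J_{n_v}-I_{n_v}$ is exactly the effective vertex-scattering matrix of the standard coupling (the case $\alpha_v=0$), the remainder being uniform on any strip $|\mathrm{Im}\,k|\le C$ once $\mathrm{Re}\,k$ is large.

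Next I would propagate this to the resonance condition. Since $\tilde\Sigma(k)$ arises from the block-diagonal matrix of the $\tilde\sigma^{(v)}(k)$ by a fixed, $k$-independent similarity and $S(k)=Q\tilde\Sigma(k)$, every entry of $S(k)$ — and therefore every pseudo-orbit amplitude $A_{\bar\gamma}(k)$, a finite product of such entries — converges as $|k|\to\infty$ to the corresponding quantity $A_{\bar\gamma}^{\mathrm{std}}$ for the standard graph, with an $\mathcal{O}(k^{-1})$ error, uniformly on strips. Writing $F_\delta$, $F_{\mathrm{std}}$ for the left-hand side of~(\ref{eq-rescon}) in the two cases and using that the sum is finite, we get $F_\delta(k)=F_{\mathrm{std}}(k)+G(k)$ with $G(k)=\mathcal{O}(k^{-1})$ uniformly on $\{|\mathrm{Im}\,k|\le C\}$.

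The last step is to locate the zeros. Resonances lie in the closed lower half-plane, and in a region $\{\mathrm{Im}\,k\le -C_0\}$ with $C_0$ large the exponential $\mathrm{e}^{ik\ell_{\bar\gamma}}$ attached to the longest irreducible pseudo-orbit dominates, so for large $\mathrm{Re}\,k$ neither $F_{\mathrm{std}}$ nor $F_\delta$ vanishes there; thus all resonances with large real part sit in a fixed horizontal strip on which the estimate for $G$ applies. On that strip $F_{\mathrm{std}}$ is an exponential sum, hence (Bohr) almost periodic, so around each of its zeros $k_n$ one can draw a circle of a fixed radius on which $|F_{\mathrm{std}}|\ge\delta_0>0$ with $\delta_0$ independent of $n$; as $|G|\to0$ on these circles, Rouch\'e's theorem gives that $F_\delta$ has inside each of them exactly as many zeros as $F_{\mathrm{std}}$. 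This produces a bijection, valid for sufficiently large real part, between the resonances of the two graphs, with paired points at distance $\mathcal{O}(k^{-1})$, which is the claim.

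I expect the main obstacle to be the uniform lower bound for $|F_{\mathrm{std}}|$ near its zero set as $\mathrm{Re}\,k\to\infty$: one has to exclude that the zeros of the standard-coupling resonance function degenerate at infinity — merge, or acquire unbounded multiplicity. For commensurate edge lengths $F_{\mathrm{std}}$ is periodic in $k$ and the bound is immediate; in general it should follow from almost periodicity together with the fact that the zero set of a fixed exponential sum cannot accumulate in this way, but this is the delicate point, along with the \emph{a priori} confinement of the $\delta$-resonances to a strip, which may require treating separately degenerate vertices with $n_v=m_v$.
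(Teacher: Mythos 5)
Your computation of the effective vertex-scattering matrix and its consequence for the pseudo-orbit amplitudes is exactly the paper's argument: the authors also derive $\tilde\sigma_{\delta}(k)=\frac{2}{n+m-\alpha/(ik)}J-I\to\frac{2}{n+m}J-I=\tilde\sigma_{\mathrm{st}}$ and conclude $A_{\bar\gamma,\delta}(k)=A_{\bar\gamma,\mathrm{st}}(k)+\mathcal{O}(|k|^{-1})$, so that $F_\delta=F_{\mathrm{std}}+\mathcal{O}(k^{-1})$ on strips. The two proofs part ways only at the final step: where you sketch a Rouch\'e comparison, the paper simply invokes Theorem~4 of Langer \cite{La} on the zeros of exponential sums, which states precisely that the zero sets of two exponential sums whose coefficients agree up to $o(1)$ approach each other asymptotically. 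In other words, the ``delicate point'' you flag at the end is not a side issue — it is the entire content of the step the paper outsources to a citation.

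As written, that step of your argument does contain a genuine gap. The claim that around each zero $k_n$ of $F_{\mathrm{std}}$ one can place a circle of \emph{fixed} radius on which $|F_{\mathrm{std}}|\ge\delta_0$ with $\delta_0$ independent of $n$ fails in general: for incommensurate edge lengths the zeros of an exponential sum need not be uniformly separated (gaps between consecutive zeros can tend to zero along subsequences), so such circles would eventually contain other zeros and no uniform lower bound on them is available. Almost periodicity alone does not rescue this. The correct statement — and what Langer's theorem effectively provides — is that for every $\varepsilon>0$ there is $\delta(\varepsilon)>0$ with $|F_{\mathrm{std}}(k)|\ge\delta(\varepsilon)$ whenever $k$ lies in the critical strip at distance at least $\varepsilon$ from the zero set; one then applies Rouch\'e to the boundary of the (possibly overlapping) union of $\varepsilon$-disks, obtaining convergence of the zeros without any claimed rate, which is all the theorem asserts. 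Your $\mathcal{O}(k^{-1})$ pairing distance would additionally require the zeros of $F_{\mathrm{std}}$ to be simple and uniformly separated, which is not established. Your closing remark about the a~priori confinement to a strip, and the possible degeneracy at balanced vertices where the extreme coefficients of the exponential sum could vanish, is a legitimate caveat that the paper also passes over silently; it is a hypothesis of Langer's theorem rather than a free consequence of it.
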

\begin{proof}
First, we consider a vertex which connects $n$ internal and $m$ external edges. We will find the effective vertex-scattering matrix. We parametrize the internal edges by $(0,\ell_j)$ and the external edges by $(0,\infty)$, in both cases zero corresponds to this vertex. We will choose the Ansatz $f_j(x) = a_j^{\mathrm{out}}\,\mathrm{e}^{ikx}+a_j^{\mathrm{in}}\,\mathrm{e}^{-ikx}$, $j = 1,\dots, n$ on the internal edges and $g_s(x) = b_s \,\mathrm{e}^{ikx}$, $s=1,\dots, m$ on the external edges. The coupling conditions yield
\begin{eqnarray*}
  a_j^{\mathrm{out}}+a_j^{\mathrm{in}} = a_i^{\mathrm{out}}+a_i^{\mathrm{in}} = b_s\,,\quad i, j=1,\dots, n\,,\quad s = 1,\dots, m\,,\\
  ik \sum_{j=1}^n (a_j^{\mathrm{out}}-a_j^{\mathrm{in}})+ik \sum_{s=1}^m b_s = \alpha (a_i^{\mathrm{out}}+a_i^{\mathrm{in}})\,,
\end{eqnarray*}
where $\alpha$ is the strength of the interaction. Substituting $b_s = a_i^{\mathrm{out}}+a_i^{\mathrm{in}}$ and $a_j^{\mathrm{out}}=a_i^{\mathrm{out}}+a_i^{\mathrm{in}} -a_j^{\mathrm{in}}$ to the second equation we obtain
$$
  (n +m)(a_i^{\mathrm{out}}+a_i^{\mathrm{in}}) -2\sum_{j=1}^n  a_j^{\mathrm{in}} = \frac{\alpha}{ik}(a_i^{\mathrm{out}}+a_i^{\mathrm{in}})\,,
$$
and hence
$$
  a_i^{\mathrm{out}} = \frac{2}{n+m-\frac{\alpha}{ik}}\sum_{j=1}^n  a_j^{\mathrm{in}} -a_i^{\mathrm{in}}\,,
$$
which gives the effective vertex-scattering matrix
$$
  \tilde \sigma_{\delta} (k) =  \frac{2}{n+m-\frac{\alpha}{ik}}J-I \,\longrightarrow\, \frac{2}{n+m}J-I = \tilde \sigma_{\mathrm{st}} (k)
$$
as $|k|\to\infty$. Hence in the limit, where $k$ goes to infinity in modulus the effective vertex-scattering matrix for $\delta$-coupling tends to the effective vertex-scattering matrix for the standard condition.

Since the coefficient $A_{\bar\gamma}$ in the pseudo-orbit expansion is a product of the entries of the effective vertex-scattering matrices, we have
$$
  A_{\bar\gamma,\delta} (k) = A_{\bar\gamma,\mathrm{st}} (k) + \mathcal{O}\left(|k|^{-1}\right)\,,
$$
where $A_{\bar\gamma,\delta}$ is the coefficient for the $\delta$-coupling and $A_{\bar\gamma,\mathrm{st}}$ is the coefficient for the respective standard condition. Consequently, by Theorem~4 in \cite{La} the resonances coming from the $\delta$-coupling coincide asymptotically with those of the same graph with the standard coupling conditions.
\end{proof}

\begin{theorem}   \label{thm-deltaprime}
The resonances of the graph with a $\delta_{\mathrm{s}}'$ coupling conditions at the vertices converge to the eigenvalues of the graph with Neumann (decoupled) conditions as their real parts tend to infinity.
\end{theorem}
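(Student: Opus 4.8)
The plan is to follow the route used for Theorem~\ref{thm-delta}: compute the effective vertex-scattering matrix of the $\delta_{\mathrm{s}}'$ coupling, pass to its high-energy limit, and then transfer that limit to the zeros of the resonance condition. Take a vertex joining $n$ internal and $m$ external edges, parametrize them by $(0,\ell_j)$ resp. $(0,\infty)$ with the vertex at the origin, and use the Ansatz $f_j(x)=a_j^{\mathrm{out}}\mathrm{e}^{ikx}+a_j^{\mathrm{in}}\mathrm{e}^{-ikx}$ on the internal edges and $g_s(x)=b_s\,\mathrm{e}^{ikx}$ on the leads. Equality of all outgoing derivatives forces $a_j^{\mathrm{out}}-a_j^{\mathrm{in}}=b_s=\delta$ with one and the same number $\delta$ for every $j$ and $s$; substituting $a_j^{\mathrm{out}}=a_j^{\mathrm{in}}+\delta$ into the second $\delta_{\mathrm{s}}'$ condition, $\sum_{j=1}^{n}(a_j^{\mathrm{out}}+a_j^{\mathrm{in}})+\sum_{s=1}^{m}b_s=ik\beta\,\delta$, gives $\delta=-\frac{2}{(n+m)-ik\beta}\sum_{j=1}^{n}a_j^{\mathrm{in}}$, hence
$$
  \tilde\sigma_{\delta_{\mathrm{s}}'}(k)=I_n-\frac{2}{(n+m)-ik\beta}\,J_n\,,
$$
$J_n$ being the $n\times n$ all-ones matrix. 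As long as the coupling parameters are non-zero at all vertices --- otherwise $\tilde\sigma_{\delta_{\mathrm{s}}'}$ is $k$-independent and no decoupling occurs --- the second term is $\mathcal{O}(|k|^{-1})$, so $\tilde\sigma_{\delta_{\mathrm{s}}'}(k)\to I_n$ as $|k|\to\infty$, and $I_n$ is precisely the effective vertex-scattering matrix of the Neumann condition, as one sees from the $\tilde\sigma$--$\tilde U$ relation with $\tilde U=I_n$.

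Since every $A_{\bar\gamma}$ is a product of entries of the matrices $\tilde\sigma_v(k)$, we obtain $A_{\bar\gamma,\delta_{\mathrm{s}}'}(k)=A_{\bar\gamma,\mathrm{N}}+\mathcal{O}(|k|^{-1})$ uniformly in $\bar\gamma$, where the constants $A_{\bar\gamma,\mathrm{N}}$ refer to the Neumann-decoupled graph; since its effective scattering matrix is $S=Q$, they vanish unless $\bar\gamma$ is built solely from the two-bond ``there-and-back'' orbits $(b_j,\hat b_j)$, in which case $A_{\bar\gamma,\mathrm{N}}=1$. Hence the limiting form of the resonance condition (\ref{eq-rescon}) is
$$
  \sum_{\mathcal{S}\subseteq\{1,\dots,N\}}(-1)^{|\mathcal{S}|}\,\mathrm{e}^{2ik\sum_{j\in\mathcal{S}}\ell_j}=\prod_{j=1}^{N}\bigl(1-\mathrm{e}^{2ik\ell_j}\bigr)=0\,,
$$
equivalently $\det(\mathrm{e}^{ikL}Q-I_{2N})=(-1)^{N}\prod_{j=1}^{N}(\mathrm{e}^{2ik\ell_j}-1)=0$, whose solutions are exactly the numbers $k$ with $k\ell_j\in\pi\mathbb{Z}$ for some $j$, i.e. the Neumann eigenvalues of the decoupled edges.

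To close the argument I would proceed as in Theorem~\ref{thm-delta}. For both couplings the resonance condition is an exponential sum over the same, coupling-independent, set of pseudo-orbit lengths, which ranges from $0$ to $2\sum_{j}\ell_j$; the coefficient of $\mathrm{e}^{i\cdot0\cdot k}$ equals $1$, and that of the maximal exponent $\mathrm{e}^{2ik\sum_j\ell_j}$ equals $(-1)^{N}$ for the Neumann graph and $(-1)^{N}(1+\mathcal{O}(|k|^{-1}))$ for $\delta_{\mathrm{s}}'$, so both extreme coefficients stay bounded away from zero and the resonances with large $|\mathrm{Re}\,k|$ are confined to a strip parallel to the real axis. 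As all the remaining coefficients converge at the rate $\mathcal{O}(|k|^{-1})$, Theorem~4 in \cite{La} applies and yields that the $\delta_{\mathrm{s}}'$-resonances with increasing real part converge to the zeros of the limiting sum, namely to the Neumann eigenvalues, which is the assertion. Carrying the comparison one order further --- localizing near a simple Neumann eigenvalue $k_0$, and using that the $\mathcal{O}(|k|^{-1})$ correction of the coefficients is, at real argument, purely imaginary, $-\frac{2}{(n+m)-ik\beta}=-\frac{2i}{k\beta}+\mathcal{O}(k^{-2})$, while the leading $k$-derivative of the limiting condition carries a factor $i$ --- one checks that the displacement of $k_0$ is real up to $\mathcal{O}(k_0^{-2})$, which is the $\mathrm{Im}\,k=\mathcal{O}\bigl((\mathrm{Re}\,k)^{-2}\bigr)$ behaviour announced in the introduction.

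The main obstacle is the last transfer step. One has to make sure that passing to the Neumann limit changes neither the extreme exponents of the exponential sum nor renders their coefficients degenerate, so that the zeros are in a genuine one-to-one correspondence for large real parts with none escaping to infinity, and that the situation is covered by the hypotheses under which the comparison result of \cite{La} was established. For the sharper $(\mathrm{Re}\,k)^{-2}$ estimate one additionally needs the Neumann eigenvalue around which the expansion is performed to be simple --- which holds when the edge lengths are rationally independent --- because if several edges resonate at a common $k_0$ a multidimensional Lyapunov--Schmidt reduction is required and the rate may then deteriorate.
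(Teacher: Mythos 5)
Your proposal is correct and follows essentially the same route as the paper: derive $\tilde\sigma_{\delta_{\mathrm{s}}'}(k)=I+\frac{2}{ik\beta-n-m}J\to I=\tilde\sigma_{\mathrm{N}}$, conclude $A_{\bar\gamma,\delta_{\mathrm{s}}'}(k)=A_{\bar\gamma,\mathrm{N}}+\mathcal{O}(|k|^{-1})$, and invoke Theorem~4 of \cite{La}. The extra material you supply --- the explicit factorized limiting condition $\prod_j(1-\mathrm{e}^{2ik\ell_j})=0$, the caveat $\beta\neq 0$, and the $\mathcal{O}((\mathrm{Re}\,k)^{-2})$ rate --- goes beyond the paper's proof of this theorem but is consistent with it and with the rate result the paper proves separately later.
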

\begin{proof}
The proof is similar to that of Theorem~\ref{thm-delta}. First, we find the effective vertex-scattering matrix for $\delta_{\mathrm{s}}'$-condition. We use the same Ansatz as there; from the coupling condition we obtain
\begin{eqnarray*}
  a_j^{\mathrm{out}}-a_j^{\mathrm{in}} = a_i^{\mathrm{out}}-a_i^{\mathrm{in}} = b_s\,,\quad i, j=1,\dots, n\,,\quad s = 1,\dots, m\,,\\
   \sum_{j=1}^n (a_j^{\mathrm{out}}+a_j^{\mathrm{in}})+ \sum_{s=1}^m b_s = ik\beta (a_i^{\mathrm{out}}-a_i^{\mathrm{in}}) \,,
\end{eqnarray*}
where $\beta$ is the strength of the interaction. Substituting $b_s = a_i^{\mathrm{out}}-a_i^{\mathrm{in}}$ and $a_j^{\mathrm{out}}=a_i^{\mathrm{out}}-a_i^{\mathrm{in}} +a_j^{\mathrm{in}}$ to the second equation we obtain
$$
  ik\beta (a_i^{\mathrm{out}}-a_i^{\mathrm{in}}) = 2 \sum_{j=1}^n  a_j^{\mathrm{in}} + (n+m)(a_i^{\mathrm{out}}-a_i^{\mathrm{in}})\,,
$$
and hence
$$
   a_i^{\mathrm{out}} = a_i^{\mathrm{in}} +\frac{2}{ik\beta -n-m}\sum_{j=1}^n a_j^{\mathrm{in}}\,.
$$
Therefore, the effective vertex-scattering matrix is
$$
  \tilde \sigma_{\delta_{\mathrm{s}}'} = \frac{2}{ik\beta -n-m}J+I \,\longrightarrow\, I =  \tilde \sigma_{\mathrm{N}}
$$
as $|k|\to\infty$, where $\tilde \sigma_{\mathrm{N}}$ is the effective vertex-scattering matrix for (decoupled) Neumann condition. The coefficients in the pseudo-orbit expansion $A_{\bar\gamma,\delta_{\mathrm{s}}'}$ corresponding to the $\delta_{\mathrm{s}}'$-conditions and $A_{\bar\gamma,\mathrm{N}}$ for the Neumann condition satisfy
$$
  A_{\bar\gamma,\delta_{\mathrm{s}}'} (k) = A_{\bar\gamma,\mathrm{N}} (k) + \mathcal{O}\left(|k|^{-1}\right)\,.
$$
Invoking again Theorem~4 of \cite{La} we conclude that the resonances of $\delta_{\mathrm{s}}'$-coupling asymptotically coincide with those of the same graph with Neumann conditions.
\end{proof}

In fact, the above argument can be extended to the situations where the leads are attached by a $\delta_{\mathrm{s}}'$-coupling and the other vertex conditions in the graph are arbitrary self-adjoint.

\begin{theorem}
In the described situation the resonances of the graph satisfy
$$
  \mathrm{Im\,}k \to 0 \quad \mathrm{as}\quad |k|\to \infty\,.
$$
\end{theorem}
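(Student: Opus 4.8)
The plan is to reduce the claim, in exactly the manner in which Theorems~\ref{thm-delta} and \ref{thm-deltaprime} were proved, to the high-energy stability property of the pseudo-orbit expansion taken from \cite{La}; the only genuinely new ingredient is to identify the limiting object as a \emph{compact} self-adjoint quantum graph, whose spectrum is therefore automatically real. First I would collect the asymptotics of the effective vertex-scattering matrices. At a vertex carrying the $\delta_{\mathrm{s}}'$-coupling together with the leads attached there, the computation in the proof of Theorem~\ref{thm-deltaprime} gives $\tilde\sigma_{\delta_{\mathrm{s}}'}(k)=\frac{2}{ik\beta-n-m}J+I=I+\mathcal{O}(|k|^{-1})$. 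At a vertex without a lead, carrying an arbitrary self-adjoint coupling described by a unitary $U_v$, the effective matrix is the ordinary vertex-scattering matrix $\sigma^{(v)}(k)=-[(1-k)U_v-(1+k)I]^{-1}[(1+k)U_v-(1-k)I]$; passing to the spectral representation of $U_v$ one checks that $\sigma^{(v)}(k)=\sigma^{(v)}_\infty+\mathcal{O}(|k|^{-1})$ with $\sigma^{(v)}_\infty=I-2P_{-1}$, where $P_{-1}$ is the spectral projection of $U_v$ onto the eigenvalue $-1$. Since $\sigma^{(v)}_\infty$ has all its eigenvalues in $\{\pm1\}$, it is itself the energy-independent scattering matrix of the scale-invariant self-adjoint coupling with coupling matrix $U^{(v)}_\infty=\sigma^{(v)}_\infty$.

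Next I would transfer these estimates to the pseudo-orbit coefficients. Each $A_{\bar\gamma}(k)$ is a finite product of entries of effective vertex-scattering matrices, all uniformly bounded in $k$, so $A_{\bar\gamma}(k)=A^{\infty}_{\bar\gamma}+\mathcal{O}(|k|^{-1})$ for every irreducible pseudo-orbit $\bar\gamma$, where $A^{\infty}_{\bar\gamma}$ results from replacing $\tilde\sigma_{\delta_{\mathrm{s}}'}(k)$ by $I$ at the vertices where the leads are attached and $\sigma^{(v)}(k)$ by $\sigma^{(v)}_\infty$ at the others. The family $\{A^{\infty}_{\bar\gamma}\}$ is, however, precisely the family of pseudo-orbit coefficients of the compact metric graph $\Gamma^{\infty}$ obtained from the internal part of $\Gamma$ by imposing Neumann conditions at the former lead vertices and the couplings $U^{(v)}_\infty$ at the remaining ones; thus the limiting form of the resonance condition~(\ref{eq-rescon}) is nothing but the secular equation $\sum_{\bar\gamma}(-1)^{m_{\bar\gamma}}A^{\infty}_{\bar\gamma}\,\mathrm{e}^{ik\ell_{\bar\gamma}}=0$ of $\Gamma^{\infty}$.

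I would then conclude. The operator $-\mathrm{d}^2/\mathrm{d}x^2$ on the compact graph $\Gamma^{\infty}$ is self-adjoint and has compact resolvent, so its spectrum is real, discrete, and accumulates only at $+\infty$; hence for $\mathrm{Re\,}k$ large enough the solutions of the limiting resonance condition are all real. Invoking Theorem~4 of \cite{La} — exactly as in the proofs of Theorems~\ref{thm-delta} and \ref{thm-deltaprime}, the required hypothesis $A_{\bar\gamma}(k)=A^{\infty}_{\bar\gamma}+\mathcal{O}(|k|^{-1})$ having just been established — we find that every resonance of $\Gamma$ with large real part lies within a distance tending to zero of a solution of the limiting condition, and therefore within such a distance of a real number. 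Consequently $\mathrm{Im\,}k\to0$ as $|k|\to\infty$, which is the assertion.

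I expect the only genuinely delicate step to be the high-energy stability of the (possibly clustering) zeros of the secular equation under the $\mathcal{O}(|k|^{-1})$ perturbation of the coefficients; this, however, is exactly what Theorem~4 of \cite{La} provides, so it need not be reproved here. A secondary point deserving a line is that the limiting vertex data indeed define a self-adjoint operator on $\Gamma^{\infty}$, which follows since both $I$ and each $U^{(v)}_\infty$ have their spectra contained in $\{\pm1\}$ and are hence admissible coupling matrices in~(\ref{eq-cc}).
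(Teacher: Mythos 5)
Your proof is correct and follows essentially the same route as the paper's: reduce to the high-energy limit of the pseudo-orbit coefficients, invoke Theorem~4 of \cite{La} to conclude that the resonances approach the zeros of the limiting secular equation, and observe that the latter is the (real) spectral condition of a compact self-adjoint graph once the leads are decoupled by the Neumann limit of the $\delta_{\mathrm{s}}'$-coupling. You in fact make explicit a step the paper leaves implicit — that the arbitrary self-adjoint couplings at the lead-free vertices must themselves be replaced by their scale-invariant high-energy limits $\sigma^{(v)}_\infty=I-2P_{-1}$ to obtain $k$-independent limiting coefficients — which is a correct and worthwhile refinement, not a deviation.
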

\begin{proof}
By the same reasoning as in the proof of Theorem~\ref{thm-deltaprime} one can verify that the resonances of $\Gamma$ converge as $|k|\to \infty$ to resonances of the same graph with the $\delta_{\mathrm{s}}'$-coupling replaced by the (decoupled) Neumann conditions. Since all the leads are decoupled from the rest of the graph with Neumann condition, the solutions to the resonances condition for the inner part of the graph are eigenvalues with $\mathrm{Im\,}k = 0$, and consequently all the resonances of $\Gamma$ approach the real axis in the asymptotic regime.
\end{proof}

To conclude this section we find the convergence rate of the resonance poles for a particular graph class. We note that a similar behavior was observed for resonances of the generalized Winter model discussed in \cite{EF}.

\begin{theorem}
Let the internal part of $\Gamma$ be equilateral with all the internal edges of lengths $\ell_0$ and let $\Gamma$ have at least one lead. Assume further that there is the same $\delta_{\mathrm{s}}'$-coupling at all the vertices. Denote $k_{0n} = n\pi/\ell_0$, then the resonances $k_n^2$ satisfy
$$
  \mathrm{Im\,}k_n = \mathcal{O}\left((\mathrm{Re\,}k_n)^{-2}\right)\,,\quad \mathrm{Re\,}(k_n-k_{0n}) = \mathcal{O}\left(({\mathrm{Re\,}k_n})^{-1}\right)
$$
as $\mathrm{Re\,}k_n \to \infty$.
\end{theorem}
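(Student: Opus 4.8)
The plan is to work with the determinant form of the resonance condition from Theorem~\ref{thm: res1} rather than with the pseudo-orbit sum, because for an equilateral graph it reduces to a statement about the spectrum of a single matrix. Since all $\ell_j=\ell_0$ we have $\mathrm{e}^{ikL}=z\,I_{2N}$ with $z:=\mathrm{e}^{ik\ell_0}$, so Theorem~\ref{thm: res1} says that $k^2$ is a resonance iff $\det\!\big(z\,S(k)-I_{2N}\big)=0$, i.e. iff $\mathrm{e}^{-ik\ell_0}\in\mathrm{spec}\,S(k)$, where $S(k)=Q\tilde\Sigma(k)$ and $\tilde\Sigma(k)$ is, by its definition, unitarily equivalent through a permutation matrix to the block-diagonal matrix assembled from the effective vertex-scattering matrices $\tilde\sigma_v(k)=I_{n_v}+\frac{2}{ik\beta-d_v}J_{n_v}$ found in the proof of Theorem~\ref{thm-deltaprime}, with $d_v=n_v+m_v$. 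From this explicit form I would first record two groups of elementary estimates for large $|k|$: (a) for complex $k$, $\|S(k)-Q\|=\mathcal O(|k|^{-1})$ and $\|S'(k)\|=\mathcal O(|k|^{-2})$; and (b) for \emph{real} $k$, the single-vertex identity $\tilde\sigma_v(k)^*\tilde\sigma_v(k)=I_{n_v}-\frac{4m_v}{k^2\beta^2+d_v^2}\,J_{n_v}$, which --- since $J_{n_v}$ is positive semidefinite, $Q$ is unitary, and the equivalence defining $\tilde\Sigma$ is unitary --- yields $\|S(k)\|\le1$ and, because $\Gamma$ has at least one lead so that some $m_v\ge1$, $\sigma_{\min}(S(k))=1-\mathcal O(k^{-2})$, hence also $\|S(k)^{-1}\|=1+\mathcal O(k^{-2})$; here $\sigma_{\min}$ denotes the smallest singular value.

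Second, I would localise the resonances near $k_{0n}=n\pi/\ell_0$. The matrix $Q$ is a real orthogonal involution with eigenvalues $\pm1$, each of multiplicity $N$, and at $k=k_{0n}$ one has $z=(-1)^n$, so $\det(I_{2N}-zQ)=(1-z^2)^N$ vanishes precisely at $k_{0n}$, to order $N$. On a fixed circle $|k-k_{0n}|=\varepsilon$ with $\varepsilon<\pi/(2\ell_0)$ the quantity $|1-z^2|^N$ stays bounded below by a positive constant, while (a) gives $\|S(k)-Q\|=\mathcal O(1/n)$ there; a Rouch\'e comparison of $\det(I_{2N}-zS(k))$ with $(1-z^2)^N$ therefore shows that for $n$ large exactly $N$ resonances (counted with multiplicity) lie in the disk $|k-k_{0n}|<\varepsilon$. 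For any such resonance $k_n=k_{0n}+\kappa_n$ the eigenvalue $\mathrm{e}^{-ik_n\ell_0}$ of $S(k_n)$ lies, because $Q$ is normal, within $\|S(k_n)-Q\|=\mathcal O(1/n)$ of $\mathrm{spec}\,Q=\{\pm1\}$; since $\mathrm{e}^{-ik_n\ell_0}=(-1)^n\mathrm{e}^{-i\kappa_n\ell_0}$ and $|\kappa_n|\le\varepsilon$, it must approach $(-1)^n$, and then $|\mathrm{e}^{-i\kappa_n\ell_0}-1|=\mathcal O(1/n)$ forces $|\kappa_n|=\mathcal O(1/n)$. In particular $\mathrm{Re}\,(k_n-k_{0n})=\mathcal O(1/n)$, and because this gives $\mathrm{Re}\,k_n=n\pi/\ell_0+\mathcal O(1/n)$, it is already the estimate $\mathrm{Re}\,(k_n-k_{0n})=\mathcal O((\mathrm{Re}\,k_n)^{-1})$ claimed in the theorem.

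Third, for the imaginary part I would deliberately avoid following individual eigenvalue branches of $S(k)$ --- the relevant eigenvalue of $Q$ being $N$-fold degenerate, branch tracking only produces an estimate that deteriorates with $N$ --- and instead use the coarse two-sided bound $\sigma_{\min}(S(k_n))\le|\mathrm{e}^{-ik_n\ell_0}|\le\|S(k_n)\|$, valid for any eigenvalue. From $|\mathrm{e}^{-ik_n\ell_0}|=\mathrm{e}^{(\mathrm{Im}\,k_n)\ell_0}$ we get $\mathrm{Im}\,k_n=\ell_0^{-1}\log|\mathrm{e}^{-ik_n\ell_0}|$, so it suffices to show $\|S(k_n)\|=1+\mathcal O(n^{-2})$ and $\|S(k_n)^{-1}\|=1+\mathcal O(n^{-2})$. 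This I would obtain by combining (b) at the real point $\mathrm{Re}\,k_n$ with (a) to cross the short segment to $k_n$: since $|\mathrm{Im}\,k_n|\le|\kappa_n|=\mathcal O(1/n)$ and $\|S'\|=\mathcal O(n^{-2})$ along it, $\|S(k_n)-S(\mathrm{Re}\,k_n)\|=\mathcal O(n^{-3})$, whence $\|S(k_n)\|\le1+\mathcal O(n^{-3})$ and $\sigma_{\min}(S(k_n))\ge1-\mathcal O(n^{-2})$. Consequently $\big|\log|\mathrm{e}^{-ik_n\ell_0}|\big|=\mathcal O(n^{-2})$, i.e. $\mathrm{Im}\,k_n=\mathcal O(n^{-2})=\mathcal O((\mathrm{Re}\,k_n)^{-2})$, which completes the argument.

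The step I expect to be the main obstacle is (b): showing that for real $k$ the matrix $S(k)$ is an isometry up to an error of order $k^{-2}$. This is the quantitative form of the flux-leakage picture and is exactly what singles out the $\delta_{\mathrm s}'$-coupling --- for a generic vertex coupling the leakage into the leads, and hence the isometry defect of $S(k)$, is only $\mathcal O(1)$, and no decay of $\mathrm{Im}\,k$ follows. Reducing (b) to the single computation of $\tilde\sigma_v(k)^*\tilde\sigma_v(k)$ requires some book-keeping: vertices carrying no lead contribute an exact isometry, every vertex with $m_v\ge1$ contributes the $\mathcal O(k^{-2})$ defect through the eigenvalue $n_v$ of $J_{n_v}$, and the hypothesis that $\Gamma$ has at least one lead is what keeps $\sigma_{\min}(S(k))$ strictly below $1$. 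A further point to flag is that, precisely because of the $N$-fold degeneracy, a Rouch\'e argument on a circle of radius $\sim1/n$ fails, so the $\mathcal O(1/n)$ bound for each \emph{individual} resonance must be obtained from the spectral reasoning of the second step rather than by shrinking the contour.
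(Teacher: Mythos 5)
Your proposal is correct, but it takes a genuinely different route from the paper's. The paper stays inside the pseudo-orbit expansion: it discards every irreducible pseudo-orbit that uses an off-diagonal entry of $\tilde\sigma$ (such orbits contribute $\mathcal O((\mathrm{Re\,}k)^{-2})$ because a closed path cannot transmit through a vertex exactly once), reduces the secular function to a sum over subsets of edges weighted by $1+\sum_s p_s$ with $p_s=-\frac{2i}{\beta\,\mathrm{Re\,}k}+\mathcal O((\mathrm{Re\,}k)^{-2})$, kills the leading term with the binomial identity $\sum_j\binom{N}{j}(-1)^j=0$, and extracts the two rates by separating real and imaginary parts of the linearized equation around $k_{0n}$. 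You instead use the determinant form of Theorem~\ref{thm: res1}: for an equilateral graph $\mathrm{e}^{ikL}$ is scalar, so a resonance is precisely a point where $\mathrm{e}^{-ik\ell_0}\in\mathrm{spec\,}S(k)$, and the two estimates follow from Bauer--Fike against the involution $Q$ (localization to $\mathcal O(1/n)$ of $k_{0n}$, hence the real-part bound) and from the sandwich $\sigma_{\min}(S)\le|\mathrm{e}^{-ik\ell_0}|\le\|S\|$ combined with the identity $\tilde\sigma_v^*\tilde\sigma_v=I-\frac{4m_v}{k^2\beta^2+d_v^2}J_{n_v}$ at real $k$, which I have verified. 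Your route buys two things: it isolates the reason for the $(\mathrm{Re\,}k)^{-2}$ rate (the isometry defect of the internal scattering is $\mathcal O(k^{-2})$ even though $S-Q=\mathcal O(k^{-1})$), and it is insensitive to the $N$-fold degeneracy of the eigenvalues $\pm1$ of $Q$ --- a real advantage, since in the paper's linearization the coefficient $\sum_{\bar\gamma\in V}\ell_{\bar\gamma}\mathrm{e}^{ik_0\ell_{\bar\gamma}}(-1)^{m_{\bar\gamma}}=2\ell_0\sum_j j\binom{N}{j}(-1)^j$ multiplying $i\,\mathrm{Re}(k-k_0)-\mathrm{Im\,}k$ vanishes for $N\ge2$, so the paper's real/imaginary comparison is delicate exactly where your singular-value argument is uniform. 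Two small points to tighten: the Rouch\'e step only counts poles inside the $\varepsilon$-disks, so to cover the whole sequence ordered by real part you should note that the same Bauer--Fike bound (every eigenvalue of $S(k)$ has modulus $1+\mathcal O(|k|^{-1})$) already excludes resonances with large real part lying outside all disks, making the Rouch\'e count a bonus rather than a needed ingredient; and the ``at least one lead'' hypothesis is not what your $\mathcal O$-estimates use --- they hold even with all $m_v=0$ --- it only ensures the poles are genuine resonances rather than eigenvalues.
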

\begin{proof}
The effective vertex scattering at the $s$-th vertex is $\tilde \sigma(k) = p_s J +I$ with $p_s = \frac{2}{ik\beta-n_s-m_s}$, where $n_s$ and $m_s$ is the number of internal and external edges in this vertex, respectively. We have $p_s = -\frac{2i}{\beta \mathrm{Re\,}k} + \mathcal{O}((\mathrm{Re\,}k)^{-2})$. We can see that the irreducible pseudo-orbits which contain at least one non-diagonal term of $\tilde \sigma$ correspond to the resonance condition with the terms of order $\mathcal{O}((\mathrm{Re\,}k)^{-2})$ or smaller. Hence the resonance condition is
$$
  0 = F(k) = \sum_{\bar\gamma \in V} \left(1+\sum_{v_s\in \bar\gamma} p_s\right)\mathrm{e}^{ik\ell_{\bar\gamma}}(-1)^{m_{\bar\gamma}} +\mathcal{O}\left((\mathrm{Re\,}k)^{-2}\right)\,,
$$
where $v_s$ is the $s$-th vertex and $V$ is the set of irreducible pseudo-orbits which contain only periodic orbits on two bonds (e.g., periodic orbits which use only diagonal terms of $\tilde\sigma$). Using the Taylor expansion around $k_{0n}=n\pi/\ell_0$ we find (having dropped the subscript $n$)
\begin{eqnarray}
 0 = F(k) = F(k_0)+ (k-k_0) \left.\frac{\partial F}{\partial k}\right|_{k_0} + \mathcal{O}((k-k_0)^2) = \sum_{\bar\gamma\in V}\mathrm{e}^{ik_0\ell_{\bar\gamma}}(-1)^{m_{\bar\gamma}}-
 \frac{2i}{\beta \mathrm{Re\,}k} \sum_{\bar \gamma\in V} \ell_{\bar\gamma} \mathrm{e}^{ik_0\ell_{\bar\gamma}}(-1)^{m_{\bar\gamma}} \nonumber\\
 + [ i \mathrm{Re\,}(k-k_0)-\mathrm{Im\,}k] \sum_{\bar \gamma\in V} \ell_{\bar\gamma} \mathrm{e}^{ik_0\ell_{\bar\gamma}}(-1)^{m_{\bar\gamma}}\left(1+\mathcal{O}\left((\mathrm{Re\,}k)^{-1}\right)\right) + \mathcal{O}\left((\mathrm{Re\,}k)^{-2}\right) + \mathcal{O}((k-k_0)^2)\,.\label{eq-compare}
\end{eqnarray}
The first term $\sum_{\bar\gamma\in V}\mathrm{e}^{ik_0\ell_{\bar\gamma}}(-1)^{m_{\bar\gamma}}$ is zero since $\mathrm{e}^{ik_0\ell_{\bar\gamma}} = 1$ and $\sum_{j=0}^{N} {N\choose j} (-1)^j = 0$. Comparing the imaginary and real part of equation~(\ref{eq-compare}) using the fact that $ \mathrm{e}^{ik_0\ell_{\bar\gamma}} \in\mathbb{R}$ we obtain the sought claim.
\end{proof}

\section{Examples}

\subsection{Loop with two halflines and a $\delta$-coupling} \label{sec-delta}

We consider the same graph as in subsection~\ref{subsec-ex1}, but now we suppose that there are $\delta$-couplings of strengths $\alpha_1$ and $\alpha_2$ at its vertices. The resonance condition is in this case
$$
  (\alpha_1-ik)(\alpha_2-ik)\sin{k\ell_1}\sin{k\ell_2}-4k^2\sin^2{\frac{k(\ell_1+\ell_2)}{2}}
  +k(\alpha_1+\alpha_2-2ik)\sin{k(\ell_1+\ell_2)} = 0\,.
$$
Positions of the resonances of this graph are shown in Figures~\ref{fig2} and \ref{fig3}; for simplicity, we use $\alpha_1 = \alpha_2$.

\begin{figure}
\centering
\includegraphics[height=5cm]{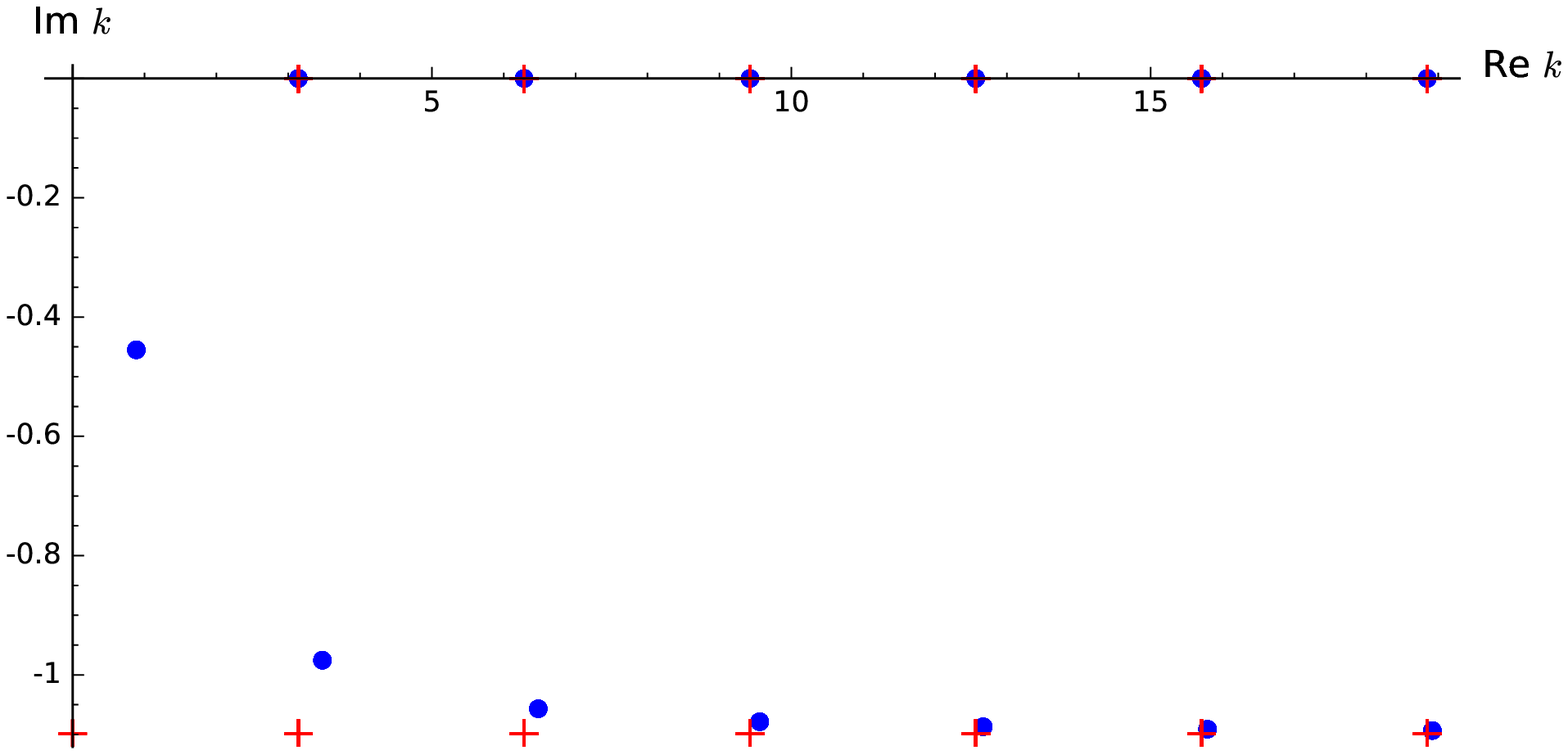}
\caption{Illustration to example in subsection~\ref{sec-delta} with the parameters $\ell_1 = 1$; $\ell_2 = 1$; $\alpha_1 = 1$; $\alpha_2 = 1$. Resonances for $\delta$-condition denoted by blue dots, resonances for standard condition by red crosses. (Color online.)}
\label{fig2}
\end{figure}

\begin{figure}
\centering
\includegraphics[height=5cm]{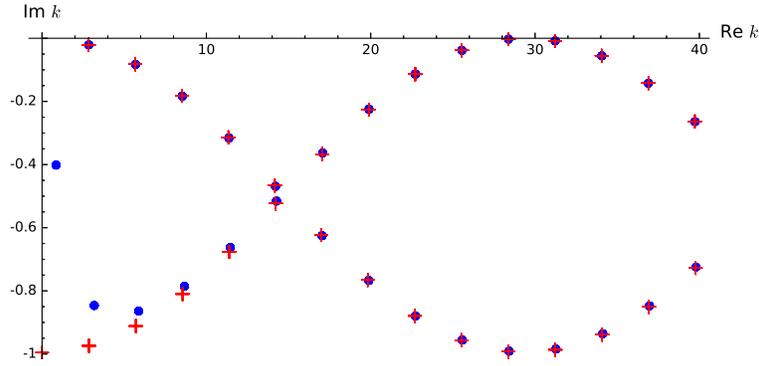}
\caption{Illustration to example in subsection~\ref{sec-delta} with the parameters $\ell_1 = 1$; $\ell_2 = 1.2137$; $\alpha_1 = 1$; $\alpha_2 = 1$. Resonances for $\delta$-condition denoted by blue dots, resonances for standard condition by red crosses. (Color online.)}
\label{fig3}
\end{figure}

\subsection{Loop with two halflines and a $\delta_{\mathrm{s}}'$-coupling} \label{sec-deltaprime}

Let us consider the same graph as in subsections~\ref{subsec-ex1} and~\ref{sec-delta} but now with $\delta_{\mathrm{s}}'$-conditions in both the vertices. The strengths of these interactions are $\beta_1$ and $\beta_2$. We choose the Ansatz $f_j (x) = a_j\,\mathrm{e}^{ikx}+b_j\,\mathrm{e}^{-ikx}$, $j = 1,2$, on both internal edges with $x=0$ in the left vertex, while on the leads we have  $g_j(x) = c_j\,\mathrm{e}^{ikx}$, $j = 1,2$. This yields
\begin{eqnarray*}
a_1-b_1 = a_2-b_2 = c_1 \,,&\quad &
ik\beta_1c_1 = a_1+b_1+a_2+b_2+c_1\,,\\
-a_1\,\mathrm{e}^{ik\ell_1}+b_1\,\mathrm{e}^{-ik\ell_1} = -a_2\,\mathrm{e}^{ik\ell_2}+b_2\,\mathrm{e}^{-ik\ell_2} = c_2\,,&\quad &
ik\beta_2 c_2 = a_1 \,\mathrm{e}^{ik\ell_1}+b_1\,\mathrm{e}^{-ik\ell_1}+a_2 \,\mathrm{e}^{ik\ell_2}+b_2\,\mathrm{e}^{-ik\ell_2}+c_2
\end{eqnarray*}
and the resonance condition is equivalent to the requirement of solvability of the above system,
$$
  [(\beta_1+\beta_2)k+2i]\sin{k(\ell_1+\ell_2)}+2(1-\cos{k\ell_1}\cos{k\ell_2})
  +(3-\beta_1\beta_2k^2-ik(\beta_1+\beta_2))\sin{k\ell_1}\sin{k\ell_2} = 0 \,.
$$
The corresponding resonance positions are shown in Figures~\ref{fig4}, \ref{fig5} and \ref{fig6}.

\begin{figure}
\centering
\includegraphics[height=5cm]{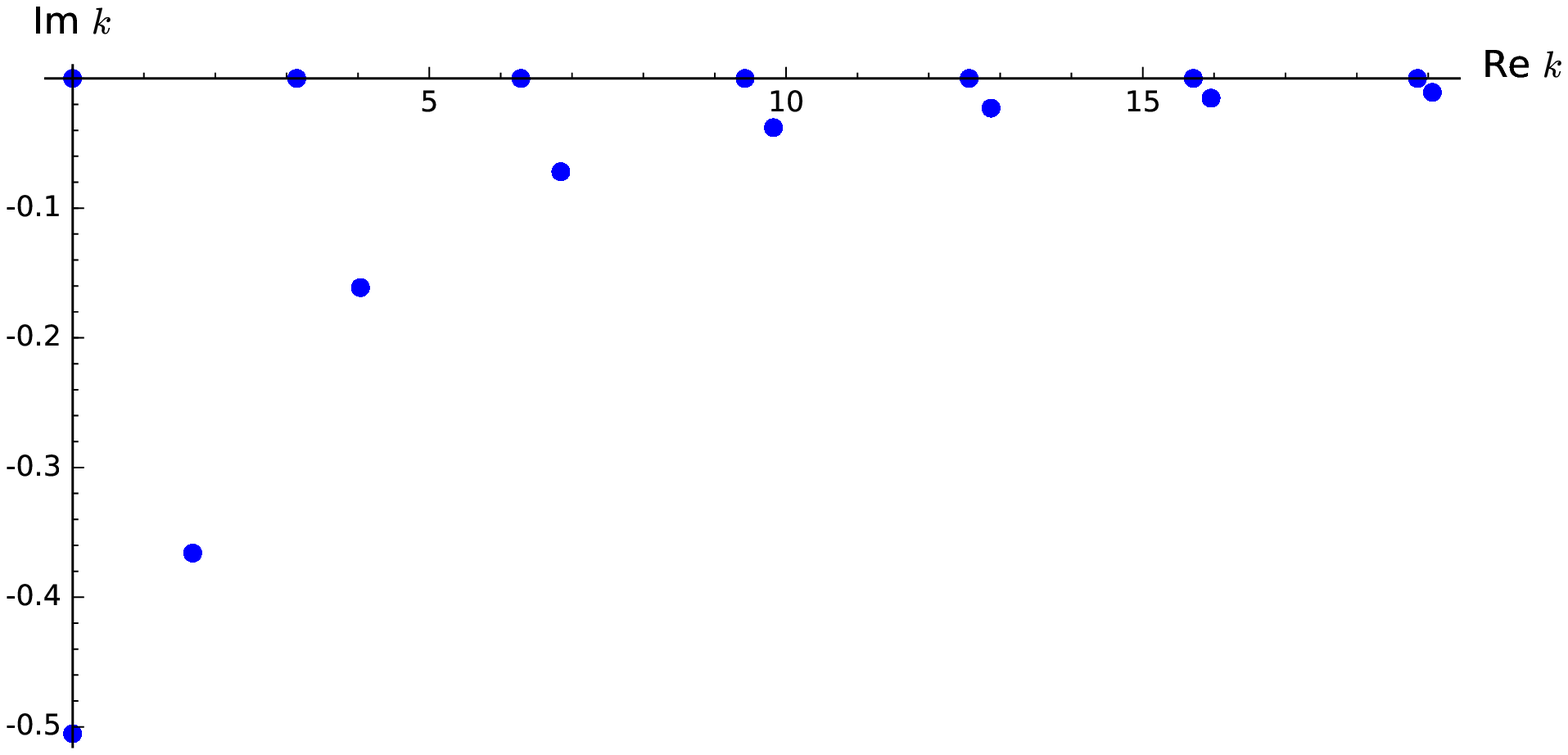}
\caption{Illustration to example in subsection~\ref{sec-deltaprime} with the parameters $\ell_1 = 1$; $\ell_2 = 1$; $\beta_1 = 1$; $\beta_2 = 1$.}
\label{fig4}
\end{figure}

\begin{figure}
\centering
\includegraphics[height=5cm]{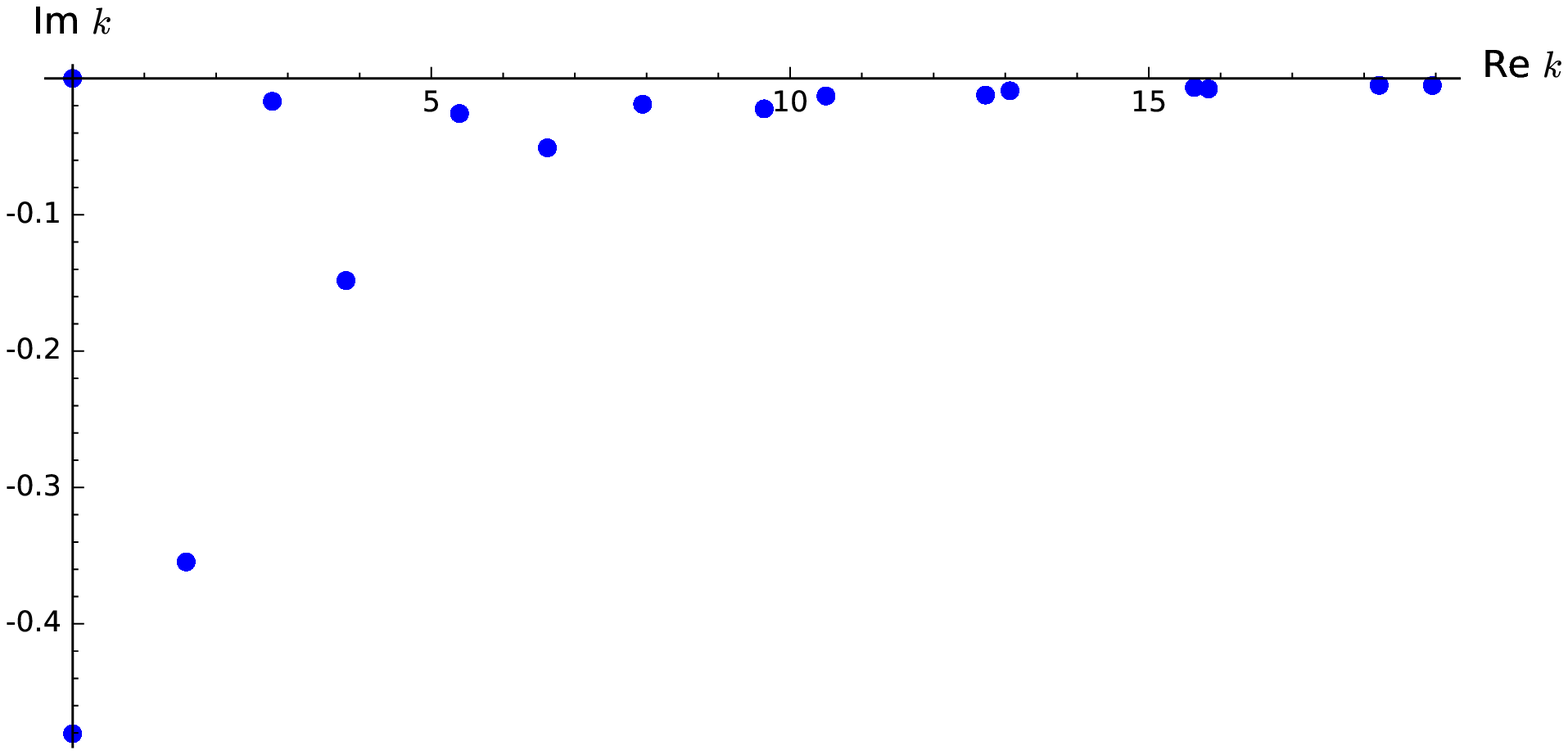}
\caption{Illustration to example in subsection~\ref{sec-deltaprime} with the parameters $\ell_1 = 1$; $\ell_2 = 1.2137$; $\beta_1 = 1$; $\beta_2 = 1$.}
\label{fig5}
\end{figure}

\begin{figure}
\centering
\includegraphics[height=5cm]{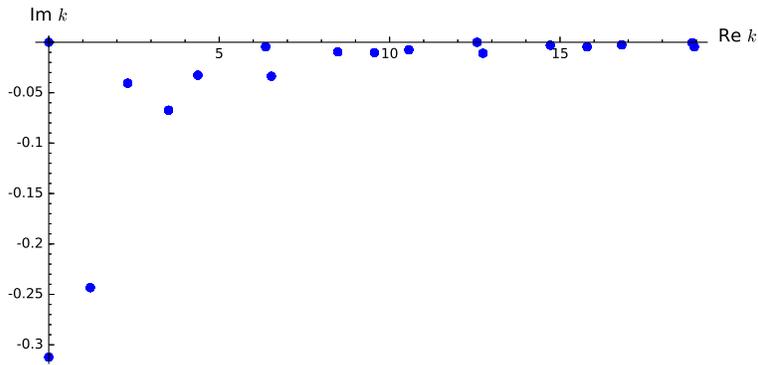}
\caption{Illustration to example in subsection~\ref{sec-deltaprime} with the parameters $\ell_1 = 1$; $\ell_2 = 1.5$; $\beta_1 = 1$; $\beta_2 = 3$.}
\label{fig6}
\end{figure}

\subsection{Loop with two leads and a combination of $\delta$ and $\delta_{\mathrm{s}}'$-couplings} \label{sec-combin}

We consider the same graph again with the $\delta$-coupling of the strength $\alpha$ in the left vertex and $\delta_{\mathrm{s}}'$-coupling of the strength $\beta$ in the right vertex. We choose the same Ansatz as in subsection~\ref{sec-deltaprime}. The coupling conditions yield
\begin{eqnarray*}
a_1+b_1 = a_2+b_2 = c_1\,,&\quad &
ik (a_1-b_1+a_2-b_2+c_1) = \alpha c_1\,,\\
c_2+a_1\,\mathrm{e}^{ik\ell_1}-b_1 \,\mathrm{e}^{-ik\ell_1} = c_2+a_2\,\mathrm{e}^{ik\ell_2}-b_2 \,\mathrm{e}^{-ik\ell_2} = 0\,,&\quad &
\beta ik c_2 = a_1\,\mathrm{e}^{ik\ell_1}+b_1 \,\mathrm{e}^{-ik\ell_1}+a_2\,\mathrm{e}^{ik\ell_2}+b_2 \,\mathrm{e}^{-ik\ell_2}+c_2
\end{eqnarray*}
and the resonance condition is
$$
  (\beta k^2+ik\alpha\beta +3ik-\alpha)\cos{k\ell_1}\cos{k\ell_2}+(-i\beta k^2+i\alpha+2k)\sin{k(\ell_1+\ell_2)}
  -2ik\sin{k\ell_1}\sin{k\ell_2}+2ik = 0 \,.
$$
Positions of the resonances are shown in Figures~\ref{fig7} and \ref{fig8}.

\begin{figure}
\centering
\includegraphics[height=5cm]{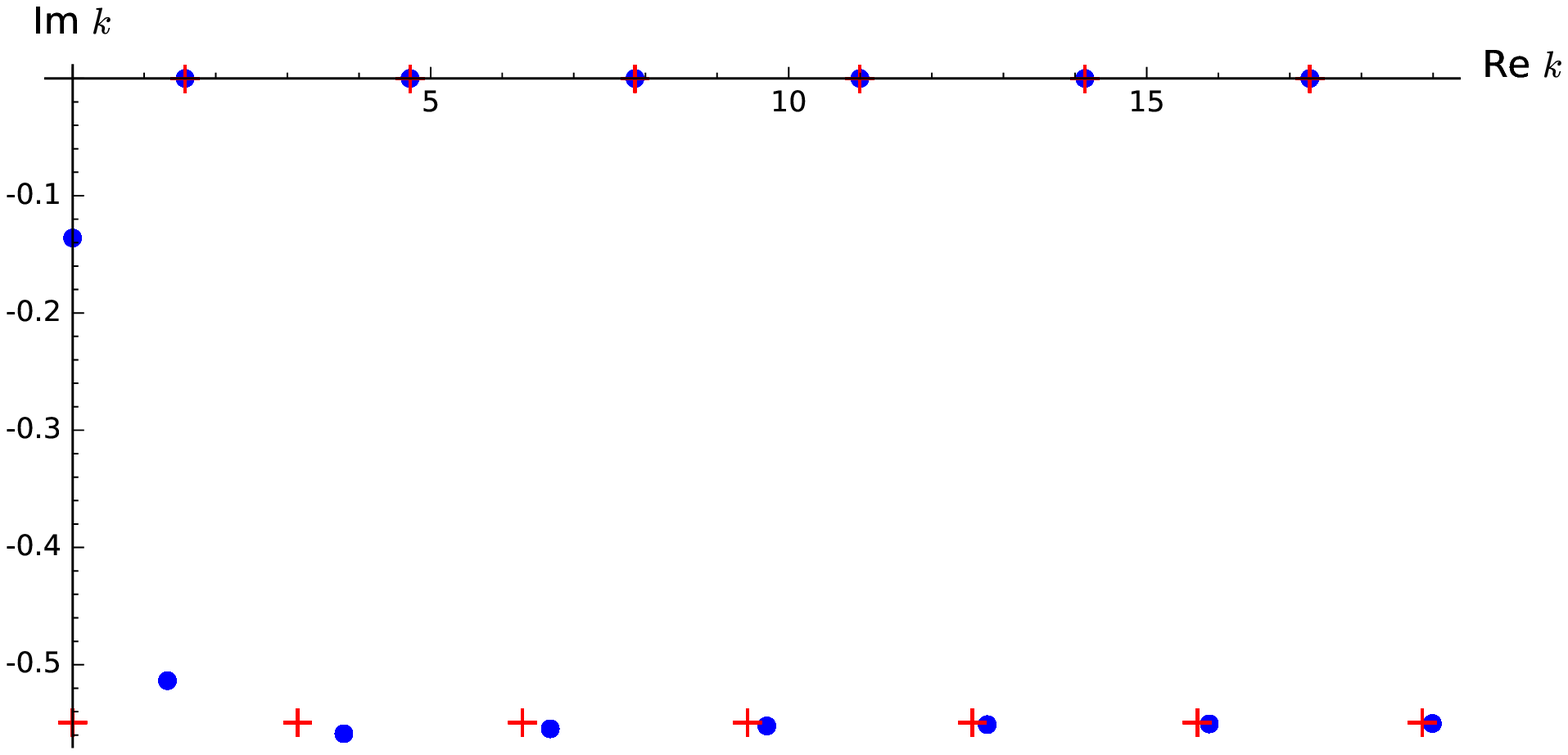}
\caption{Illustration to example in subsection~\ref{sec-combin} with the parameters $\ell_1 = 1$; $\ell_2 = 1$; $\alpha = 1$; $\beta = 1$. Resonances for $\delta$-condition and $\delta'_{\mathrm{s}}$-condition denoted by blue dots, resonances for $\delta$ replaced by standard condition and $\delta'_\mathrm{s}$ replaced by Neumann condition denoted by red crosses. (Color online.)}
\label{fig7}
\end{figure}

\begin{figure}
\centering
\includegraphics[height=5cm]{}
\caption{Illustration to example in subsection~\ref{sec-combin} with the parameters  $\ell_1 = 1$; $\ell_2 = 1.2137$; $\alpha = 1$; $\beta = 1$. Resonances for $\delta$-condition and $\delta'_{\mathrm{s}}$-condition denoted by blue dots, resonances for $\delta$ replaced by standard condition and $\delta'_\mathrm{s}$ replaced by Neumann condition denoted by red crosses. (Color online.)}
\label{fig8}
\end{figure}


%
%

%



\begin{thebibliography}{10}

\bibitem[BHJ12]{BHJ}
\textsc{Band, R., Harrison, J.~M., and Joyner, C.~H.}
\newblock Finite pseudo orbit expansions for spectral quantities of quantum
  graphs.
\newblock \emph{J. Phys. A: Math. Theor.}, 2012.
\newblock vol.~45, 325204.

\bibitem[BK13]{BK}
\textsc{Berkolaiko, G. and Kuchment, P.}
\newblock \emph{Introduction to Quantum Graphs}.
\newblock Mathematical Surveys and Monographs 186. AMS, 2013, 270 pp.

\bibitem[DEL10]{DEL}
\textsc{Davies, E.B., Exner, P. and Lipovsk\'{y}, J.}
\newblock Non-{W}eyl asymptotics for quantum graphs with general coupling conditions.
\newblock \emph{J. Phys. A: Math. Theor.}, 2010.
\newblock vol. 43, 474013.

\bibitem[DP11]{DP}
\textsc{Davies, E.B., Pushnitski, A.}
\newblock Non-{W}eyl resonance asymptotics for quantum graphs.
\newblock \emph{Analysis \& PDE}, 2011.
\newblock vol. 4, pp. 729--756.

\bibitem[EF06]{EF}
\textsc{Exner, P. and Fraas, M.}
\newblock Resonance asymptotics in the generalized {W}inter model.
\newblock \emph{Phys. Lett. A}, 2006.
\newblock vol. 360, pp. 57--61.

\bibitem[EL07]{EL1}
\textsc{Exner, P. and Lipovsk\'{y}, J.}
\newblock Equivalence of resolvent and scattering resonances on quantum graphs.
\newblock In \emph{Adventures in Mathematical Physics (Proceedings,
  Cergy-Pontoise 2006)}, vol. 447. Providence, R.I., 2007 pp. 73--81.

\bibitem[EL10]{EL2}
\textsc{Exner, P. and Lipovsk\'{y}, J.}
\newblock Resonances from perturbations of quantum graphs with rationally
  related edges.
\newblock \emph{J. Phys. A: Math. Theor.}, 2010.
\newblock vol.~43, p. 105301.

\bibitem[EP13]{EP}
\textsc{Exner, P. and Post, O.}
\newblock A general approximation of quantum graph vertex couplings by scaled Schr\"odinger operators on thin branched manifolds.
\newblock \emph{Commun. Math. Phys.}, 2013.
\newblock vol.~322, pp. 207--227.

\bibitem[GSS13]{GSS}
\textsc{Gnutzmann, S., Schanz, H., and Smilansky, U.}
\newblock Topological resonances in scattering on networks (graphs).
\newblock \emph{Phys. Rev. Lett.}, 2013.
\newblock vol. 110, p. 094101.

\bibitem[Lan31]{La}
\textsc{Langer, R.}
\newblock On the zeros of exponential sums and integrals.
\newblock \emph{Bull. Amer. Math. Soc.}, 1931.
\newblock vol.~37, pp. 213--239.

\bibitem[Lip16]{Li3}
\textsc{Lipovsk\'{y}, J.}
\newblock On the effective size of a quantum graph.
\newblock J. Phys. A: Math. Theor., 2016.
\newblock vol.~49, to appear; \texttt{http://arxiv.org/abs/1507.04176}.

\bibitem[Lip15]{Li4}
\textsc{Lipovsk\'{y}, J.}
\newblock Pseudo-orbit expansion for the resonance condition on quantum graphs
  and the resonance asymptotics.
\newblock \emph{Acta Physica Polonica A}, 2015.
\newblock vol. 128, no.~6, pp. 968--973.

\bibitem[LZ]{LZ}
\textsc{Lee, M. and Zworski, M.}
\newblock A {F}ermi golden rule for quantum graphs.
\newblock Preprint; \texttt{http://arxiv.org/abs/1602.04866}.
\end{thebibliography}

\end{document}